\newtheorem{theorem}{Theorem}[section]
\newtheorem{lemma}[theorem]{Lemma}
\newtheorem{prop}[theorem]{Proposition}
\newtheorem{remark}{Remark}
\newtheorem*{klein}{Kleinrock's Observation}
\newtheorem*{klein_rv}{Revised Observation}
\newtheorem*{conj}{Conjecture}
\begin{document}
%
\title{End-to-End Delay Approximation in Packet-Switched Networks}
%
%
%

\author{Yu~Chen
\thanks{Y. Chen is with the National Engineering Laboratory for Mobile
Network Technologies, Beijing University of Posts and Telecommunications,
Beijing, 100086, China e-mail: {yu.chen}@bupt.edu.cn.}
}

\maketitle

\begin{abstract}

In this paper, I develop a generalized method to approximate end-to-end delay (average delay, jitter and density functions) in packet-switched networks of any size under 1) Kleinrock's
independence assumption (KIA) and 2)
when packet lengths are kept unchanged
when they traverse from node to node in a network, which is an Alternative to Kleinrock's
independence assumption (AKIA). I introduce a new phase-type distribution $C(\mathbf{p},\boldsymbol \theta)$;
and then use results from the network flow
theory and queueing theory to show that
the end-to-end delay in PSNs under KIA and AKIA are two different random variables approximately described by $C(\mathbf{p},\boldsymbol \theta)$. When PSNs have AKIA, I show from simulation that the method under AKIA significantly reduces end-to-end delay approximation errors and provides close approximation compared with the method under KIA.

\end{abstract}

\begin{IEEEkeywords}
Phase-type distribution, packet-switched networks, jitter, delay distribution, Kleinrock's independence assumption.
\end{IEEEkeywords}

%
\IEEEpeerreviewmaketitle

\section{Introduction}
\IEEEPARstart{B}{ack} in 1961, Kleinrock used queueing theory to analyze message flows in message-switched communication networks (MSCNs)\footnote{It is
unnecessary to differentiate the terms 1) message and packet, and 2) message switching and packet switching from the analysis point of view. In this work, I use the terms ``packet'' and ``packet switching'' because packet switching dominates today's data networks.}. In his early work,
he raised a question about communication nets \cite{Kleinrock1961}:

\noindent \textbf{Kleinrock's question:} \emph{What is the probability density distribution for the total
time lapse between the initiation and reception of a message between any two nodes? }

\noindent and made his classic independence assumption \cite{Kleinrock1962}:

\noindent \textbf{Kleinrock's independence assumption (KIA):}  \emph{Each time that a message is
received at a node within the network, a new length is chosen for this packet
independently from an exponential distribution.}


%

KIA is widely used till today when designing or optimizing communication networks (see references in \cite{QiongLi2001,Sappidi2013,Teng2019,Schulz2020,Choi2020}). However, this assumption does not meet actual situations most of the time because packet lengths are usually unchanged as they pass through networks \cite{Kleinrock1976}. Here, we have another more realistic assumption:

\noindent \textbf{Alternative to Kleinrock's independence assumption (AKIA):} \emph{Packets maintain their lengths as they pass through the network and the service time (i.e., transmission time) at each link is directly
proportional to the packet length and channel capacity.}

\noindent However, the end-to-end delay modeling
problem under this assumption is considered to be analytically intractable \cite{Kleinrock2002}.



A PSN under AKIA can be modeled as a queueing network with
dependent service times. Two-node tandem networks with equal service times at both nodes were studied in \cite{Mitchell1977,Boxma1979,Boxma1979a,Pinedo1982}.
In 1977, Mitchell et al. \cite{Mitchell1977} investigated the effect of
service time correlation based on simulation experiments. Boxma \cite{Boxma1979}, \cite{Boxma1979a} derived an exact solution for
the stationary joint distributions of waiting times at both queues in $M/G/1 \rightarrow G/1$ system. Pinedo and Wolff \cite{Pinedo1982} compared expected waiting times under AKIA and KIA.
Two-node tandem networks when service times are proportional to message lengths were studied in \cite{Calo1981a}.

A few studies have reported tandem networks under AKIA with more than two queues. Rubin \cite{Rubin1975,Rubin1974,Rubin1976} considered a case of fixed packet length and approximated end-to-end delay formulae in a communication path. Expected waiting times
in a series of queues were investigated in \cite{Wolff1981}, \cite{Sandmann2010}.
In 2010, Chen et al. derived a new distribution function to model
end-to-end delay in a wireless multi-hop path \cite{Chen2010}. In 2011, Popescu and Constantinescu \cite{Popescu2011} reported results on network latency distribution of an actual chain of IP
routers experiments. By comparing the experimental results with predictions under KIA, they concluded that KIA isn't valid to model end-to-end delay distribution in these experiments.


A complex PSN with multiple flows can also be modeled as a multicommodity flow network in the field of the network flow theory \cite{Ahuja1993}. However, the theory assumes static flows so it is not for the case of stochastic flows that has a fluctuating nature.

For the past six decades, it is shown in the literature that analytical results on end-to-end delay under AKIA are limited to tandem networks. Moreover, only a two-node tandem network with equal capacity channels is known to have a non-explicit solution \cite{Boxma1979}, \cite{Boxma1979a}, but this solution isn't for the end-to-end delay. Kleinrock's question when PSNs have AKIA is still unanswered.


In this paper, I will answer Kleinrock's question on the end-to-end delay approximation in complex
PSNs under AKIA. Specifically, I first identify that the distribution derived in \cite{Chen2010} is a new special acyclic continuous-time phase-type distribution, which is
denoted $C(\mathbf{p}, \boldsymbol \theta)$. In the second step, I model a complex PSN with traffic flows as a multicommodity flow network in the field of the network flow theory; and then model a flow passing through such a network as a series of queues in the field of the queueing theory. In the last step, based on the results from the network flow theory and queueing theory, I apply $C(\mathbf{p}, \boldsymbol \theta)$ to develop a generalized method to approximate the end-to-end
delay (such as average end-to-end delay, jitter and density functions) for any traffic flow within the net under AKIA and KIA. Moreover, all the formulae related to this general method have simple and explicit forms.
In the simulation part, I consider two types of nets under AKIA: the first net is a tandem network with one traffic flow; the second net is a 100-node hypothetical network with 500 host pairs and 1000 traffic flows. By analyzing large simulation data sets from simulating the above two topologies, I show that the delay approximation method under AKIA provides close and much better approximation results than the method under KIA does when PSNs has AKIA.


The rest of this paper is organized as follows: the modeling procedure of a complex PSN with flows is described in Section II, i.e., a PSN with flows can be modeled as a multicommodity flow network; and each node in the net can be modeled as a queueing model. Section III is about developing a generalized approximation of end-to-end delay under AKIA and KIA by using $C(\mathbf{p}, \boldsymbol \theta)$, results from the multicommodity flow network and queueing theory. In Section IV, I will use the above methods under AKIA and KIA to approximate end-to-end delay in two topologically different network examples.
I show the approximation accuracy of the method under AKIA by comparing their approximation results with simulation results when PSNs have AKIA.
Section V summarizes this paper's work. The important
symbols and their definitions are listed in Table \ref{tab:not} for convenience.

\begin{table}[t]
\caption{Notations and Definitions}
\begin{tabular}{p{0.3in} p{2.9in}}
$\gamma$& average number of packets per second (or external arrival rate) of a specific flow \\
$\lambda(u,v)$& average number of packets (or total arrival rate) entering the channel per second on the channel $(u, v)$ \\
$\rho(u,v)$& traffic load of the channel $(u, v)$ \\
$1/\mu$& average length of packet \\
$\mu_X$& expected value of a random variable $X$, i.e., $E[X]$ \\
$\tilde{\mu}_X$& approximation of the mean value of a random variable $X$ \\
$\sigma_X$& the standard deviation of a random variable $X$ \\
$\tilde{\sigma}_X$& approximation of the standard deviation value of a random variable $X$ \\
$1/\mu$& average length of packet \\
$A_n$ & interarrival time of the $n^{\rm th}$ and the $n^{\rm n+1}$st packets of a specific flow \\
$c(u,v)$& channel capacity of the link $(u, v)$\\
$\overline{c}$& compound channel capacity of a specific flow along the path $p$, i.e., $\overline{c} = 1/\left(\sum_{j=i}^{h} 1/c({u,v})\right)$\\
$K_i$& the $i^{th}$ traffic flow, which constrains the information of source, sink and packet arrival rate \\
$p$& routing path of a specific flow\\
$R$& compound service time along the path $p$\\
$R(u,v)$& transmission time for the $n^{\rm th}$ packet passing through the link $(u,v)$\\
$\ddot{R}(u,v)$& transmission time for the $n^{\rm th}$ packet passing through the link $(u,v)$ when a packet length is chose under KIA${^{a}}$\\
$S_X(t)$& ccdf of a random variable $X$\\
$\tilde{S}_X(t)$& approximated ccdf of a random variable $X$\\
$s$& source node of a flow $i$ \\
$T(u,v)$& delay for the $n^{\rm th}$ packet passing through the link $(u,v)$ (includes both time on queue and time in transmission) under AKIA\\
$\ddot{T}(u,v)$& delay for the $n^{\rm th}$ packet passing through the link $(u,v)$ (includes both time on queue and time in transmission) under KIA\\
$d$ & destination node of a flow $i$ \\
$V$& length of the $n^{\rm th}$ packet under AKIA\\
$V(u,v)$& length of the $n^{\rm th}$ packet received node $u$ under KIA\\
$W$& compound queueing delay along the path $p$\\
$W(u,v)$& waiting time on the queue for the $n^{\rm th}$ packet passing through the link $(u,v)$\\
$Z$& end-to-end delay of the $n^{\rm th}$ packet of a specific flow under AKIA\\
$\ddot{Z}$& end-to-end delay of the $n^{\rm th}$ packet of a specific flow under KIA\\
\end{tabular}
\label{tab:not}
\footnotesize{ ${}^a$symbols with two dots hovered refer to quantities when KIA is made; symbols without these two dots refer to quantities under AKIA.}
\end{table}






\section{Packet-Switched Network and Stochastic Traffic Flow Modeling}

To model the end-to-end delay performance of stochastic traffic flows in PSNs,
let us first
model the network and flows as a multicommodity flow network model from a macroscopic viewpoint (Section \ref{sec:multi_flow_network}), and then model the stochastic traffic flow and queues
within a node as a queueing model from a microscopic viewpoint (section \ref{sec:traffic_model}).

\subsection{Multicommodity Flow Network Model}
\label{sec:multi_flow_network}

\begin{figure}[t]
\centering
\includegraphics[width=2.5in]{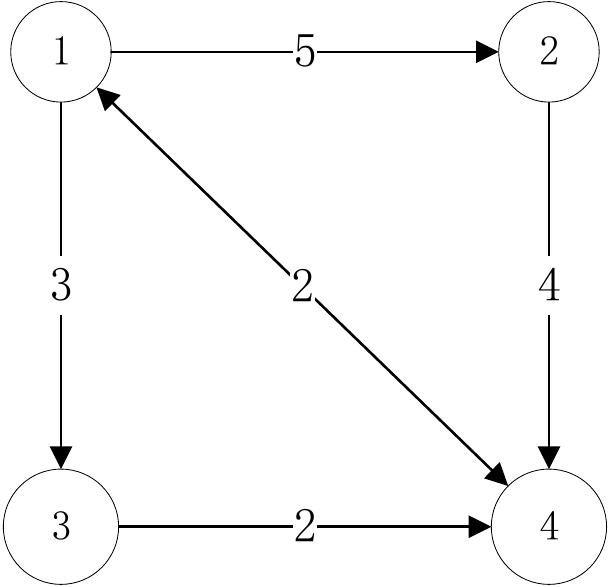}
\caption{Four-node network example.}
\label{fig:4_node sys}
\end{figure}

A PSN has some nodes (e.g., switches
and routers) and communication links. The topology of a PSN can be described as a directed graph $G\{V, E\}$, where
$V=\{1, 2, \cdots, N\}$ is a set of nodes and $E$ is a subset of the ordered pairs (arc) $(u, v)$ of elements taken from $V$. We call $(u, v)$ a one-way communication link (or link for short) and $c(u, v)$ is the channel capacity of the link $(u, v)$. For example, the network shown in Fig. \ref{fig:4_node sys} consists of
four nodes ($V=\{1, 2, 3, 4\}$ and six links $E=\{(1, 2), (1, 3), (2, 4), (3, 4), (1, 4), (4, 1)\}$. The symbol
``\begin{tikzpicture}[scale=2] \draw[>=triangle 90, <->](0,0.1) -- (0.3,0.1);\end{tikzpicture}'' in Fig. \ref{fig:4_node sys} indicates a full-duplex link.

There are $k$ traffic flows $K=\{K_{1},K_{2},\dots ,K_{k}\}$. For a traffic flow $i~(i\le k)$, we have $K_{i}=(s_{i},d_{i},\gamma_{i})$, where $s_{i} \in V$ and $d_{i} \in V$ are the source and the destination nodes of such a traffic flow, and $\gamma_{i}$ is the packet arrival rate in packets per unit time.
Let
$p_i = s_i x_2 x_3 \cdots x_{h} d_i$ (or $p_i = x_1 x_2 x_3 \cdots x_{h} x_{h+1}$ by letting $s_i = x_0$ and $d_i = x_{h+1}$) be a sequence of distinct nodes having the property
that $(s_i, x_2)$, $(x_j, x_{j+1})$ (for each
$j=2,\cdots h $) and $(x_{h}, d_i)$ are arcs (links) in $E$. We call such a sequence of nodes a path from $s_i$ to $d_i$. Moreover, the variable $f_i(u,v)$ defines the fraction of flow $i$ along the link $(u,v)$.

A fixed routing strategy is
considered, indicating that there exists only one path for any traffic flow, i.e., $f_i(u,v)\in \{0,1\}$ and the path is simple, i.e., no two nodes in $p_i~(i = 1, 2, \cdots, k)$ are identical. Table \ref{tab:rout_tab} shows a central routing directory \cite{Stallings2013} (essentially a $4\times 4$ routing table matrix $R$) for the network example of Fig. \ref{fig:4_node sys}. When packets of a flow enter a node, this directory identifies
the next node to visit on the route based on the flow's destination node. When $G(V,E)$, $K$ and $R$ are known,
we have the total flow arrival rate on the link $(u,v)$ (this quantity plays a central role in this work):
\begin{equation}
\lambda(u,v) = \sum^{k}_{i=1} f_i(u,v)\gamma_i.
\label{eq:total_arri}
\end{equation}
Let $1/\mu$ be the average packet length of flows. To ensure network stability, the total flow data rate on any link shouldn't exceed their capacity:
\begin{equation}\label{x}
\frac{\lambda(u,v)}{\mu} \leq c(u,v),\forall (u,v)\in E.
\end{equation}


\begin{table}[t]
\centering
\caption{Central routing directory example for the four-node example in Fig. \ref{fig:4_node sys}.}
\begin{tabular}{|l|*{4}{c|}}\hline
\backslashbox{To Node}{Current Node}
&\makebox[1em]{1}&\makebox[1em]{2}&\makebox[1em]{3}
&\makebox[1em]{4}\\\hline
1 &--&2&3&4\\\hline
2 &4&--&4&4\\\hline
3 &4&4&--&4\\\hline
4 &1&1&1&--\\\hline
\end{tabular}
\label{tab:rout_tab}
\end{table}

\subsection{Queueing Model for Stochastic Traffic Model and Store-and-Forward Mechanism}
\label{sec:traffic_model}

For flow $i$ (for simplicity, the flow index $i$ is usually dropped if no ambiguity is raised), it has the following two properties:
\begin{enumerate}
	\item Packet arrivals follow a Poisson process with the mean arrival rate $\gamma$;
	\item Packet lengths are independent exponentially distributed random variables with the
mean packet length $1/\mu$.
\end{enumerate}
For example, let $A_n$ and $V_n$ be the time between the arrivals of the $n^{\rm th}$ and the $(n+1)^{\rm st}$ packets of this flow, and the $n^{\rm th}$ packet length, respectively. The pdf of $A_n$ is given by
\begin{equation}
f_{A_n}(x) = \gamma {e^{ - \gamma x}}.
\label{eq:2}
\end{equation}
The pdf of packet length $V_n$ is given by
\begin{equation}\label{message_length}
f_{V_n}(x) = \mu {e^{ - \mu x}}.
\end{equation}
Moreover, KIA and AKIA both imply that 1) the channel capacities of links are deterministic and 2) the retransmission rate of any link is small and can be ignored.
Therefore, when the $n^{\rm th}$ packet that passes through a link $(u,v)$, the service time $R_n(u,v)$ of this packet is
\begin{equation}\label{eq:service_time_def}
  R_n(u,v) = \frac{V_n}{c(u,v)}.
\end{equation}

Any node in a PSN implements a store-and-forward mechanism. When packets are passing through a node, they are stored in a queue, if necessary, and then forwarded (transmitted) to
the next node based on a routing strategy. The reference design architecture of routers is shown in Fig. \ref{fig:router_arch}. Such a mechanism together with the above stochastic traffic model can be sufficiently characterized by a queueing model. Finally, the arrival rate $\gamma$ and
the average packet length $1/\mu$ both appear in the multicommodity flow model in Section
\ref{sec:multi_flow_network} and the queueing model in Section \ref{sec:traffic_model};
and they bridge these two models from a macroscopic view to a microscopic view.


\begin{figure}[t]
\centering
\includegraphics[width=3.5in]{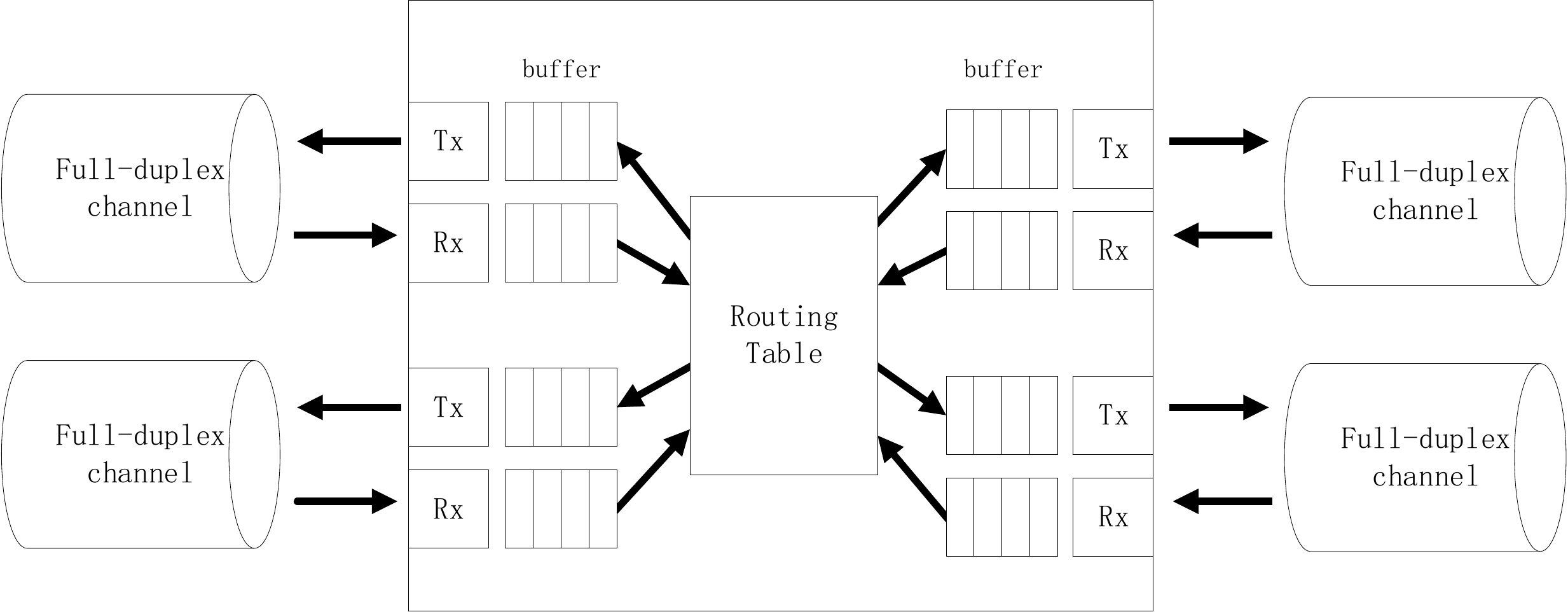}
\caption{Reference router architecture.}
\label{fig:router_arch}
\end{figure}

\section{New Phase-Type Distribution $C(\mathbf{p}, \boldsymbol \theta)$ and End-to-End Delay Approximation under AKIA and KIA}
\label{sec:packet}

Given that we have the complete information of $G(V,E)$, $K$, $R$, $c(u,v)$ and $\lambda(u,v)$, let us
formulate a framework for end-to-end delay approximation of any flow in PSNs.
We first 1) focus on any flow $i$ ($K_i=(s_i,d_i,\gamma_i)$) with source $s$ and destination $d$ (the index $i$ is again dropped for simplicity) and 2) the $n^{\rm th}$ packet of this flow with its length being $V$ (for the same simplicity purpose, the index $n$ is dropped). Under fixed routing, this packet shall follow a path $p=x_1x_2x_3\cdots x_h x_{h+1}$ (by letting $s = x_1$ and $d = x_{h+1}$).

%

Denote by $T(x_j,x_{j+1})$ the delay of the $n^{\rm th}$ packet
experienced when it passes through the link $(x_j, x_{j+1})$ along the path $p$. $T(x_j,x_{j+1})$ is the sum of the queueing delay $W(x_j,x_{j+1})$ and
the service time $R(x_j,x_{j+1})$ (the processing and propagation delays are neglected):
\begin{equation}\label{eq:tn_wn_sn}
T(x_j,x_{j+1}) = W(x_j,x_{j+1}) + R(x_j,x_{j+1}).
\end{equation}
According to \eqref{eq:service_time_def}, the service time of the $n^{\rm th}$ packet at the link $(x_j,x_{j+1})$ is
\begin{equation}\label{r_n_sk}
  R(x_j,x_{j+1}) = \frac{V}{c(x_j,x_{j+1})}.
\end{equation}
The end-to-end packet delay $Z$ of the $n^{th}$ packet from source $s$ to destination $d$ along the path $p$ is the sum of delays
$T(x_j,x_{j+1})~(j=\{1,2,\cdots, h\})$:
\begin{align}\label{znij}\nonumber
Z \mathop  = \limits^{(1)}  &\sum_{j=1}^{h} {T(x_j,x_{j+1})} \mathop  = \limits^{(2)}  \sum_{j=1}^{h}  {\left(W(x_j,x_{j+1}) + R(x_j,x_{j+1})\right)} \\
\mathop  =  \limits^{(3)}& \sum_{j=1}^{h}  {\left(W(x_j,x_{j+1}) + \frac{V}{c(x_j,x_{j+1})}\right)},
\end{align}
where the $2^{\rm nd}$ equality holds because of \eqref{eq:tn_wn_sn} and the $3^{\rm rd}$ equality holds because of \eqref{r_n_sk}. Eq. \eqref{znij} breaks down an end-to-end packet delay into smaller delays.



Let's then introduce three assumptions:
\begin{description}
  \item[A1:] The flow into each queue of the network is Poisson;
  \item[A2:] The queueing delays $W(x_j,x_{j+1}) (j=\{0,1,2,\cdots, h\})$ at each link are mutually independent;
  \item[A3:] The service times and queueing delays are mutually
independent.
\end{description}
These three assumptions must be made because they allow us to analyze queues in isolation so the end-to-end delay approximation problem in a complex network can be simplified to an approximation problem in a series of queues. However, these assumptions are in general not true.
Consider a two-node tandem network under AKIA.
The fact that the interarrival times at the second queue are strongly correlated with the packet lengths (see p. 210 in \cite{Bertsekas1992}) rejects the assumptions A1-A3. Further consider a PSN with KIA, which is essentially an open multiclass Jackson network. A path
has to satisfy the non-overtaking condition to ensure independence \cite{Walrand1980}.

The last ingredient to approximating the delay is a new
distribution function $C(\mathbf{p}, \boldsymbol \theta)$, which will be first introduced in
Section \ref{sec:on_the_dist}.
Eq. \eqref{znij} together with the assumptions A1-A3 and $C(\mathbf{p}, \boldsymbol \theta)$ serve as the basis for solving the end-to-end delay approximating problem under AKIA in Section \ref{sec:wo_kia} and under KIA in Section \ref{sec:w_kia}.


%

\subsection{On the Distribution in \cite{Chen2010}}
\label{sec:on_the_dist}

We now briefly present the distribution in \cite{Chen2010}. Let $h\geq 1$ be a positive integer, $p_1,p_2,\cdots,p_h$ be
$h$ real numbers between 0 and 1, and
$X_1,X_2,\cdots,X_h$ be $h$ independent
Bernoulli distributed random variables (rvs) with parameter $p_1, p_2, \cdots, p_k$; the probability
mass function (pmf) of $X_f$ is given by
\begin{equation}
{\displaystyle f_{X_f}(t)={\begin{cases}p_f&{\text{if }}t=1,\\q_f=1-p_f&{\text{if }}t=0.\end{cases}}}
\label{eq:1}
\end{equation}
Let ${\theta}_1,{\theta}_2,\cdots,{\theta}_h$ be
$h$ positive real numbers and  $Y_1, Y_2 ,\cdots, Y_h$ be $h$ independent
rvs, with $Y_f$ following an exponential distribution of parameter $\theta_f$; the probability density function (pdf) of $Y_f$ is given by
\begin{equation}
{f_{{Y_f}}}(t) = {\lambda_f}{e^{ - {\theta_f}t}}.
\label{eq:1}
\end{equation}
If $U_f$ is the product of $X_f$ and $Y_f$ (i.e., $U_f=X_f Y_f$), then $U_1, U_2 ,\cdots, U_h$ are $h$ independent
rvs; the pdf of $U_f$ is \cite{Chen2010}
\begin{equation}
{f_{{U_f}}}(t) = {p_f}{\theta_f}{e^{ - {\theta_f}t}} + {q_f}\delta (t).
\label{eq:1}
\end{equation}
The cumulative density function (cdf) of $U_i$ is
\begin{equation}
P(U_f \le t) = 1-{p_f}{e^{ - {\theta_f}t}}.
\label{eq:v_i_pdf}
\end{equation}

The random variable $Z=\sum_{f=1}^h U_{f}$ follows a distribution in \cite{Chen2010} with two parameter sets ${\mathbf{p}}=\{p_1,p_2,\cdots,p_h\}$ and
${\boldsymbol{\theta}=\{\theta{}}_1,{\theta{}}_2,\cdots,{\theta{}}_h\}$. Under the condition of ${\theta{}}_f\neq{\theta{}}_g~(f\neq g, \forall f, g \le h)$, the pdf of $Z$ is given by
\begin{equation}
{f_Z}(t) = \sum\limits_{f = 1}^h {\frac{{{\theta_f}}}{{{P_f}}}{e^{ - {\theta
_f}t}}}  + Q\delta \left( t \right),
\label{eq:ph_pdf}
\end{equation}
where
\begin{equation}\label{eq:p_i}
\frac{1}{{{P_f}}} = {p_f}\prod\limits_{g = 1, g \ne f}^h \left({1+\frac{{{p_g}{\theta_f}}}{{{\theta_g} - {\theta_f}}}}\right)
\end{equation}
and
\begin{equation}
Q = 1 - F(0) = \prod\limits_{f = 1}^h {{q_f}}.
\label{eq:6}
\end{equation}
The complementary cumulative density function (ccdf) of
$Z$ is given by \cite{Chen2010}
\begin{equation}
P(Z > t) = {S}_Z(t) =  \sum\limits_{f = 1}^h {\frac{{1}}{{{P_f}}}{e^{ - {\theta
_f}t}}}.
\label{eq:ph_cdf-1}
\end{equation}
Moreover, the mean and the variance of $Z$ are \cite{Chen2010}
\begin{equation}
\mu_Z = E[Z] = \sum\limits_{f = 1}^h {\frac{{{p_f}}}{{\theta_f}}},
\label{eq:12-mean}
\end{equation}
and
\begin{equation}\label{eq:14-1-variance}
  \sigma^2_Z = Var[Z] = \sum\limits_{f = 1}^h {\left(
{\frac{{2{p_f}}}{{\lambda _f^2}} - {{\left( {\frac{{{p_f}}}{{{\theta_f}}}}
\right)}^2}} \right)}.
\end{equation}
The distribution with a pdf of \eqref{eq:ph_pdf} or a ccdf of \eqref{eq:ph_cdf-1} is an acyclic continuous-time phase-type distribution; it is denoted as $C(\mathbf{p},\boldsymbol \theta)$, i.e.,
$Z \sim C(\mathbf{p},\boldsymbol \theta)$.
Its Markov chain and phase-type related properties
are discussed in Appendix \ref{sec:prop} in detail. Note that these properties
will later be used in this work.




In the next part, I will show how to use this $C(\mathbf{p}, \boldsymbol \theta)$ to approximate end-to-end delay
in PSNs under AKIA as well as KIA.

\subsection{Delay Approximation Method for PSNs under AKIA Using $C(\mathbf{p}, \boldsymbol \theta)$}
\label{sec:wo_kia}

Eq. \eqref{znij} can be expanded as
\begin{equation}\label{znij_wo-1}
Z = \sum_{j=1}^{h} {W(x_j,x_{j+1})}  + \sum_{j=1}^{h} \frac{V}{c(x_j,x_{j+1})} = W + R,
\end{equation}
where $W$ and $R$ are called the compound queueing delay and the compound service time along the path $p$ in this paper.

Because packets keep their lengths unchanged, the compound service time in \eqref{znij_wo-1} is
\begin{equation}\label{rnij_wo}
R = \sum_{j=1}^{h} \frac{V}{c(x_j,x_{j+1})} = V \sum_{j=1}^{h} \frac{1}{c(x_j,x_{j+1})}= \frac{V}{\overline{c}},
\end{equation}
where
\begin{equation}\label{eq:c_ij_sk}
  \frac{1}{\overline{c}} = \sum_{j=1}^{h} \frac{1}{c(x_j,x_{j+1})}.
\end{equation}
$\overline{c}$ is termed compound channel capacity along the path $p$.
Because the packet length is exponential with a pdf of \eqref{message_length}, \eqref{rnij_wo} indicates that $R$ is
another exponentially distributed rv with the mean compound service time being
\begin{equation}\label{a}
E[R] = E[V] \frac{1}{\overline{c}} = \frac{1}{\mu \overline{c}} = \frac{1}{\theta_R}.
\end{equation}
So the pdf of the compound service time $R$ is
\begin{equation}
f_{R}(t) = \mu \overline{c} {e^{ - \mu \overline{c} t}} = \theta_R {e^{ - \theta_R t}}.
\label{eq:compound_exp}
\end{equation}
Remark \ref{pro:deg_exp_erl} in Appendix \ref{sec:prop} shows that an exponential distribution is a special case of $C(\mathbf{p}, \boldsymbol \theta)$. Hence, we have $R \sim C(1, \theta_R)$.



If the assumption A1 is made, then a path $p$ can be decomposed into
$h$ individual $M/M/1$ queues (for each link $(x_j, x_{j+1})$ along the path). The flow into each link is a Poisson process with mean rate ${\lambda{}}({x_j,x_{j+1}})$ from \eqref{eq:total_arri}.
Moreover, it is a property of an $M/M/1$ queue that the stationary distribution function of the
queueing delay $W(x_j,x_{j+1})$ at the link $(x_j, x_{j+1})$ is

\begin{equation}\label{eq:w_sk_cdf}
\begin{array}{l}
P(W(x_j,x_{j+1}) \le t)
\approx 1 - {\rho (x_j,x_{j+1})} e^{{ - {\theta (x_j,x_{j+1})}t}},
\end{array}
\end{equation}
where
\begin{equation}\label{rhosk}
\rho(x_j,x_{j+1}) = \frac{{\lambda{}}(x_j,x_{j+1})}{\mu c(x_j,x_{j+1})}
\end{equation}
and
\begin{equation}\label{theta_sk}
\theta (x_j,x_{j+1}) = \mu c(x_j,x_{j+1}) - {\lambda{}}(x_j,x_{j+1}).
\end{equation}
By comparing the cdf of \eqref{eq:v_i_pdf} with the cdf of \eqref{eq:w_sk_cdf}, we can confirm that the rv $W(x_j,x_{j+1}) \sim C(\rho (x_j,x_{j+1}), \theta (x_j,x_{j+1}))$ approximately.

Based on the assumption A2, the compound queueing delay $W$ is
the sum of $W(x_j,x_{j+1})$ along the path so $W$ approximately follows $C(\mathbf{p}_W, \boldsymbol{\theta}_W)$ with
\begin{equation}\label{eq:p_w}
\mathbf{p}_{W} = \{\rho(x_1,x_2), \cdots, \rho(x_j,x_{j+1}), \cdots,\rho(x_h,x_{h+1})\}
\end{equation}
and
\begin{equation}\label{eq:theta_w}
\boldsymbol{\theta}_W = \{\theta(x_1,x_2), \cdots, \theta(x_j,x_{j+1}), \cdots, \theta(x_h,x_{h+1})\}.
\end{equation}
%
Speaking intuitively, consider a patient who must perform a number of tests in a hospital and each test has a queue. There is a chance that the patient does not have to wait in a queue if he/she finds the queue is empty. The total time the patient spent in all the queues is a $C(\mathbf{p}_W, \boldsymbol \theta_W)$ distributed rv.


Eq. \eqref{znij_wo-1} shows that the end-to-end delay $Z$ is
the sum of the compound queueing delay $W$ ($W \sim C(\mathbf{p}_W, \boldsymbol \theta_W)$ approximately) and the compound service time $R$ ($R \sim C(1, \theta_R)$). By using the assumption A3, I have the
following result:

\begin{prop}
\label{prop:delay_akia}
In a PSN under AKIA, consider a flow with a fixed path $p$. The end-to-end delay
$Z$ can be described by $C(\mathbf{p}_Z, \boldsymbol \theta_Z)$ approximately with
\begin{equation}\label{eq:p_w_1}
\mathbf{p}_{Z} = \mathbf{p}_{W}\cup \{1\} = \{\rho(x_1,x_2), \rho(x_2,x_3), \cdots, \rho(x_h,x_{h+1}), 1\}
\end{equation}
and
\begin{equation}\label{eq:theta_w_r}
\boldsymbol{\theta}_Z = \boldsymbol{\theta}_{W}\cup \{\theta_R\} = \{\theta(x_1,x_2), \theta(x_2,x_3), \cdots, \theta(x_h,x_{h+1}, \theta_R)\}.
\end{equation}
\end{prop}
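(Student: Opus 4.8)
The plan is to assemble the result directly from the decomposition $Z = W + R$ already obtained in \eqref{znij_wo-1}, together with the two distributional facts proved immediately above the statement and a single closure lemma for $C(\mathbf{p}, \boldsymbol\theta)$. Concretely, I have already established that the compound service time satisfies $R \sim C(1, \theta_R)$ (from \eqref{eq:compound_exp} and the fact that an exponential is a degenerate $C$), and that the compound queueing delay satisfies $W \sim C(\mathbf{p}_W, \boldsymbol\theta_W)$ approximately (from \eqref{eq:w_sk_cdf} through \eqref{eq:theta_w}, under A1 and A2). Assumption A3 supplies the independence of $W$ and $R$. Thus the whole statement reduces to showing that the sum of two \emph{independent} $C$-distributed random variables is again $C$-distributed, with parameter sets equal to the concatenations (unions) of the two given parameter sets.

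That closure lemma is the heart of the argument, and I would prove it from the generative definition of $C$ rather than from the partial-fraction pdf \eqref{eq:ph_pdf}. Recall that $Z' \sim C(\mathbf{p}, \boldsymbol\theta)$ means $Z' = \sum_{f=1}^{h} U_f$ with $U_f = X_f Y_f$, the $X_f$ independent Bernoulli($p_f$) and the $Y_f$ independent exponential($\theta_f$), all mutually independent. If $Z'' \sim C(\mathbf{p}', \boldsymbol\theta')$ is an independent copy built from a disjoint family of Bernoulli--exponential products, then $Z' + Z''$ is itself a sum of independent products $XY$ of the same form, now indexed by the union of the two index sets; hence $Z' + Z'' \sim C(\mathbf{p}\cup\mathbf{p}', \boldsymbol\theta\cup\boldsymbol\theta')$ by definition. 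Equivalently, one may verify this through Laplace--Stieltjes transforms: the transform of $C(\mathbf{p}, \boldsymbol\theta)$ factors as $\prod_f \bigl(q_f + p_f\theta_f/(\theta_f+s)\bigr)$, and the product of two such transforms is again of that factored form over the combined index set. Applying the lemma to $Z = W + R$ with $W \sim C(\mathbf{p}_W, \boldsymbol\theta_W)$ and $R \sim C(1, \theta_R)$ independent then yields $Z \sim C(\mathbf{p}_W \cup \{1\}, \boldsymbol\theta_W \cup \{\theta_R\})$, which is exactly \eqref{eq:p_w_1} and \eqref{eq:theta_w_r}; the Bernoulli parameter attached to the $R$-component is $1$ because $R$ is a pure exponential (its $X$ is degenerate at $1$), and its rate is $\theta_R$ from \eqref{eq:compound_exp}.

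The step I expect to be the main obstacle is not the algebra but the careful bookkeeping of the approximations and of the distinctness condition. First, the conclusion can only be \emph{approximate}: the factor $W \sim C(\mathbf{p}_W, \boldsymbol\theta_W)$ already inherits the $M/M/1$ approximation in \eqref{eq:w_sk_cdf} and the modeling assumptions A1--A3, which the text has flagged as generally false. The generative/transform argument is itself exact, so I would emphasize that no \emph{new} error is introduced by the summation step and that the approximation is entirely upstream. Second, the closed-form pdf \eqref{eq:ph_pdf} and ccdf \eqref{eq:ph_cdf-1} were derived under the hypothesis $\theta_f \neq \theta_g$ for $f \neq g$; after concatenation I must check that $\theta_R$ and the link rates $\theta(x_j,x_{j+1})$ are mutually distinct for those explicit formulas to apply. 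The distribution $C(\mathbf{p}_Z, \boldsymbol\theta_Z)$ is well defined regardless, but if two rates coincide the partial-fraction representation must be replaced by its confluent (Erlang-type) limit. I would state this as the single caveat and argue that coinciding rates arise only on a negligible set of capacity and arrival-rate configurations.
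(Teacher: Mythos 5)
Your proposal is correct and follows essentially the same route as the paper: decompose $Z = W + R$ via \eqref{znij_wo-1}, identify $W \sim C(\mathbf{p}_W, \boldsymbol\theta_W)$ (approximately, under A1--A2) and $R \sim C(1, \theta_R)$, invoke A3 for independence, and conclude by the closure of $C(\mathbf{p}, \boldsymbol\theta)$ under sums of independent components --- a fact the paper treats as immediate from the generative definition $Z' = \sum_f X_f Y_f$, which you simply make explicit as a lemma (with the LST factorization as an alternative check). Your caveat about coinciding rates is also consistent with the paper, which handles $\theta_f = \theta_g$ after Proposition \ref{prop:delay_kia} by perturbing one of the rates.
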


Based on Proposition \ref{prop:delay_akia} and , we have the approximated pdf and ccdf of delay $Z$ (according to \eqref{eq:ph_pdf} and \eqref{eq:ph_cdf-1}):
\begin{equation}
{f}_{Z}(t) \approx {\tilde{f}}_{Z}(t) = \sum\limits_{f = 1}^{h+1}
\frac{\theta_f}{P_f} e^{ - {\theta
_f}t}
\label{eq:ph_pdf_ud_akia}
\end{equation}
and
\begin{equation}
{S}_{Z}(t) \approx {\tilde{S}}_{Z}(t) = \sum\limits_{f = 1}^{h+1} \frac{1}{P_f} e^{ - {\theta
_f}t},
\label{eq:ph_cdf_delay}
\end{equation}
where
\begin{equation}
\frac{1}{P_f} = {p_f}{\prod\limits_{\substack{g = 1,g \ne f}}^{h+1} \left({1+\frac{{{p_g}{\theta _f}}}{{{\theta_g} - {\theta _f}}}}\right){}},
~~ p_{f/g} \in \mathbf{p}_Z, \theta_{f/g} \in \boldsymbol{\theta}_Z,
\label{eq:pf_ud_akia}
\end{equation}
Moreover, the approximated average delay and the jitter are
\begin{equation}\label{eq:wokia_mean-rm}
E[Z]\approx  \tilde{\mu}_Z = \sum\limits_{j = 1}^h E[T(x_j,x_{j+1})]
= \sum\limits_{f = 1}^h \frac{1}{\theta_{f}}, ~~\theta_f \in \boldsymbol \theta_W
\end{equation}
and
\begin{equation}\label{eq:appr_jitter_ud_akia}
  \sigma_{Z}\approx \tilde{\sigma}_{Z}=\sqrt{Var[Z]} = \sqrt{\sum\limits_{f = 1}^{h+1} {\left(
{\frac{{2{p_f}}}{{\theta_f^2}} - {{\left( {\frac{{{p_f}}}{{{\theta_f}}}}
\right)}^2}} \right)}},~p_{f} \in \mathbf{p}_Z, \theta_f \in \boldsymbol{\theta}_Z.
\end{equation}
For the result of \eqref{eq:wokia_mean-rm}, see Appendix \ref{sec:proof_avg}. Eq. \eqref{eq:appr_jitter_ud_akia}
is trivial from \eqref{eq:14-1-variance}.

%




%

\subsection{Delay Approximation Method for PSNs under KIA}
\label{sec:w_kia}

We can roughly follow the same steps in Section \ref{sec:wo_kia} to develop the end-to-end delay approximation in a PSN under KIA.
Consider again a flow $i$ with source $s$ and destination $d$ and the $n^{\rm th}$ packet that goes through a sequence of nodes $\{x_1, x_2, \cdots, x_{h+1}\}$.
With KIA, each time the packet is received at a node (say node $j$, in other words, at a link
$(x_j,x_{j+1})$) within the net, a
new length, $V(x_j,x_{j+1})$, is chosen for this packet from the pdf of \eqref{message_length}. Therefore, $V(1,2), \cdots, V(x_j,x_{j+1}), \cdots, V(h,h+1)$ are independent and identically distributed rvs identical to the rv $V$.

Under KIA, \eqref{znij} can be written as
\begin{align}\nonumber
\ddot{Z} &= \sum_{j=1}^{h} {\ddot{T}(x_j,x_{j+1})} = \sum_{j=1}^{h} {W(x_j,x_{j+1})}  + \sum_{j=1}^{h} \frac{V(x_j,x_{j+1})}{c(x_j,x_{j+1})}\\\nonumber\label{znij_wo}
&=\sum_{j=1}^{h}\left( {W(x_j,x_{j+1})}  + \frac{V(x_j,x_{j+1})}{c(x_j,x_{j+1})}\right)\\
&=\sum_{j=1}^{h}\left( {W(x_j,x_{j+1})}  + \ddot{R}(x_j,x_{j+1})\right),
\end{align}
where $\ddot{R}(x_j,x_{j+1})$ is the service time at link $(x_j,x_{j+1})$.
If the assumption A1 is made, the stationary distribution function of the
queueing delay $W(x_j,x_{j+1})$ at the link $(x_j, x_{j+1})$ is
\begin{equation}
\begin{array}{l}
P(W(x_j,x_{j+1}) \le t)
\approx 1 - {\rho (x_j,x_{j+1})} e^{{ - {\theta (x_j,x_{j+1})}t}},
\end{array}
\label{eq:10}
\end{equation}
where $\rho (x_j,x_{j+1})$ and $\theta (x_j,x_{j+1})$ are defined in \eqref{rhosk} and \eqref{theta_sk}, respectively.
Note that $W(x_j,x_{j+1})$ is again a $C(\rho (x_j,x_{j+1}), \theta (x_j,x_{j+1}))$ distributed rv approximately.

According to \eqref{znij_wo}, $\ddot{T}(x_j,x_{j+1})$ is the sum of $W(x_j,x_{j+1})$ and $\ddot{R}(x_j,x_{j+1})$. Considering that the service time $\ddot{R}(x_j,x_{j+1})$ is exponential with the mean service time of $1/\mu c(x_j,x_{j+1})$, I have the following result:
\begin{remark}
\label{rem:exp}
$\ddot{T}(x_j,x_{j+1})$ is exponential with the mean delay of $1/\theta (x_j,x_{j+1})$; the cdf of $\ddot{T}(x_j,x_{j+1})$ is given by
\begin{equation}
P(\ddot{T}(x_j,x_{j+1}) \le t)
\approx 1 -  e^{{ - {\theta (x_j,x_{j+1})}t}}.
\label{eq:10}
\end{equation}
\end{remark}
For a proof of Remark \ref{rem:exp}, see Appendix \ref{sec:proof_exp}.
The end-to-end delay $\ddot{Z}$ is the sum of the delay $\ddot{T}(x_j,x_{j+1})$ at all links along the path. Because $\ddot{T}(x_j,x_{j+1})$ is approximately exponential, the stationary distribution of $\ddot{Z}$ is approximately hypoexponential, i.e., $\ddot{Z}\sim Hypo(\boldsymbol\theta_W)$ approximately. Most importantly, Remark \ref{pro:deg_exp_erl} in Appendix \ref{sec:prop} shows that a hypoexponental distribution is a special case of $C(\mathbf{p}, \boldsymbol \theta)$ as well. Now we have a result for PSNs under KIA:

\begin{prop}
\label{prop:delay_kia}
In a PSN under KIA, consider a flow with a fixed path $p$.
$\ddot{Z}$ can be described by $C(\mathbf{1}, \boldsymbol \theta_W)$ approximately, where $\mathbf{1}$ is a set of ``1''s and $\boldsymbol \theta_W$ is defined in \eqref{eq:theta_w}.
\end{prop}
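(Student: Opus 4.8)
The plan is to assemble $\ddot{Z}$ from its per-link components exactly as in the AKIA development of Section~\ref{sec:wo_kia}, but to exploit the feature that distinguishes KIA: the service times at distinct links are drawn independently, which collapses each link's sojourn time to a single exponential and leaves no common length term. First I would start from the decomposition \eqref{znij_wo}, writing $\ddot{Z} = \sum_{j=1}^{h}\ddot{T}(x_j,x_{j+1})$ with $\ddot{T}(x_j,x_{j+1}) = W(x_j,x_{j+1}) + \ddot{R}(x_j,x_{j+1})$, and invoke Remark~\ref{rem:exp} to replace each summand by an exponential random variable of rate $\theta(x_j,x_{j+1}) = \mu c(x_j,x_{j+1}) - \lambda(x_j,x_{j+1})$. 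This is the crucial structural simplification relative to AKIA: whereas under AKIA a single length $V$ couples all links and contributes the extra factor $C(1,\theta_R)$, under KIA the per-link lengths $V(x_j,x_{j+1})$ are i.i.d., so waiting plus service at each link is the ordinary $M/M/1$ sojourn time, already exponential with rate $\theta(x_j,x_{j+1})$, and the compound service time never appears as a separate phase.

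Next I would establish independence of the summands. Under KIA the redrawn lengths $V(x_1,x_2),\dots,V(x_h,x_{h+1})$ are independent by construction, while assumptions A2 and A3 supply mutual independence of the queueing delays and independence of service times from queueing delays; together these yield that the $h$ exponentials $\ddot{T}(x_j,x_{j+1})$ are mutually independent. A sum of independent exponentials with rates $\theta(x_1,x_2),\dots,\theta(x_h,x_{h+1})$ is by definition hypoexponential, so $\ddot{Z}\sim Hypo(\boldsymbol\theta_W)$ approximately, with $\boldsymbol\theta_W$ the rate vector of \eqref{eq:theta_w}; the qualifier ``approximately'' is inherited from the $M/M/1$ assumption A1 and the approximate waiting-time law.

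The final and decisive step is to locate this hypoexponential law inside the family $C(\mathbf{p},\boldsymbol\theta)$. Returning to the construction $Z=\sum_f X_f Y_f$ with $X_f\sim\mathrm{Bernoulli}(p_f)$ and $Y_f\sim\mathrm{Exp}(\theta_f)$, setting every $p_f=1$ forces $X_f\equiv 1$, so each $U_f=Y_f$ is a plain exponential and $Z$ reduces to a sum of independent exponentials, i.e. exactly $Hypo(\boldsymbol\theta)$. This is precisely the degeneration recorded in Remark~\ref{pro:deg_exp_erl}, which I would cite to conclude $Hypo(\boldsymbol\theta_W)=C(\mathbf{1},\boldsymbol\theta_W)$, giving $\ddot{Z}\sim C(\mathbf{1},\boldsymbol\theta_W)$. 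I expect the only genuine subtlety to be the per-link-exponentiality-plus-independence claim rather than the algebraic matching: the clean exponential form of $\ddot{T}(x_j,x_{j+1})$ rests on Remark~\ref{rem:exp} and on the KIA-specific independence of the redrawn lengths, whereas identification with $C(\mathbf{1},\boldsymbol\theta_W)$ is immediate from the $p_f\to 1$ limit. Thus the main obstacle is conceptual, namely making precise that KIA, unlike AKIA, decouples the path into independent $M/M/1$ queues so that no common service-time phase survives; once that is in place the parameter matching is routine.
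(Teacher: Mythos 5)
Your proposal is correct and follows essentially the same route as the paper: decompose $\ddot{Z}$ via \eqref{znij_wo}, collapse each per-link sojourn to an exponential using Remark~\ref{rem:exp}, conclude $\ddot{Z}\sim Hypo(\boldsymbol\theta_W)$ from independence of the summands, and identify this as the all-ones case $C(\mathbf{1},\boldsymbol\theta_W)$ via Remark~\ref{pro:deg_exp_erl}. Your explicit appeal to A2/A3 and the KIA-independence of the redrawn lengths only makes precise an independence step the paper leaves implicit, so there is no substantive difference.
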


Based on Proposition \ref{prop:delay_kia}, the approximated pdf and ccdf of $\ddot{Z}$ are
\begin{equation}\label{eq:pdf_w_kia}
{f}_{\ddot{Z}}(t) \approx {\tilde{f}}_{\ddot{Z}}(t) = \sum\limits_{f = 1}^h {\prod\limits_{g = 1, g \ne f}^h \left({\frac{{\theta_{g}}}{{{\theta _{g}} - {\theta _{f}}}}}\right)}\theta_f
{e^{ - {\theta_f}t}},
~\theta_{f/g} \in \boldsymbol{\theta}_W,
\end{equation}
and
\begin{equation}\label{eq:ccdf_w_kia}
{S}_{\ddot{Z}}(t) \approx {\tilde{S}}_{\ddot{Z}}(t) = \sum\limits_{f = 1}^h {\prod\limits_{g = 1, g \ne f}^h \left({\frac{{\theta_{g}}}{{{\theta _{g}} - {\theta _{f}}}}}\right)}{e^{ - {\theta_f}t}},
~\theta_{f/g} \in \boldsymbol{\theta}_W,
\end{equation}
where $\boldsymbol{\theta_W}$ is defined in \eqref{eq:theta_w}. The distribution function of \eqref{eq:ccdf_w_kia} under KIA may first be found by Wong in 1978 \cite{Wong1976} and was also used in Popescu and Constantinescu's work \cite{Popescu2011}.
Moreover, the case $\theta_f = \theta_g$ in \eqref{eq:ph_pdf_ud_akia}, \eqref{eq:ph_cdf_delay},
\eqref{eq:pdf_w_kia} and \eqref{eq:ccdf_w_kia} is pathological as it causes a division by zero problem. Such a problem can be mitigated by slightly changing either the value of $\theta_f$ or the value of $\theta_g$.

The approximated average delay and the jitter for flow $i$ are
\begin{equation}
E[\ddot{Z}] \approx \hat{\mu}_{\ddot{Z}} = \sum\limits_{j = 1}^h E[{\ddot{T}}(x_j,x_{j+1})]=\sum\limits_{f = 1}^h {\frac{{{1}}}{{\theta _f}}}, ~~\theta_f \in \boldsymbol \theta_W
\label{eq:wokia_mean-1}
\end{equation}
and
\begin{equation}\label{eq:jitter_ud_kia}
  \sigma_{\ddot{Z}} \approx \tilde{\sigma}_{\ddot{Z}} = \sqrt{Var[\ddot{Z}]} = \sqrt{\sum\limits_{f = 1}^h {\frac{1}{\theta _f^2}}},~~\theta_f \in \boldsymbol{\theta}_W.
\end{equation}
\eqref{eq:wokia_mean-1} and \eqref{eq:jitter_ud_kia} are the basic properties of the hypoexponential distribution. We have a few remarks below that are useful for analyzing simulation results in the next section.



\begin{remark}\label{rm:one-hop}
When the path length is one (i.e., $h=1$), $Z = \ddot{Z}$.
\end{remark}

The proof of this remark is trivial: If the path length is one, then both the PSN under AKIA and the PSN under KIA can be modeled as an $M/M/1$ system with the same parameters,
indicating that $Z = \ddot{Z}$.

\begin{remark}\label{rm:same_avg}
The approximated average end-to-end delays under AKIA and under KIA are the same, i.e.,
\begin{equation}\label{eq:equal_mean}
\tilde{\mu}_Z = \tilde{\mu}_{\ddot{Z}}.
\end{equation}

\end{remark}
By comparing \eqref{eq:wokia_mean-rm} with \eqref{eq:wokia_mean-1}, we can conclude that \eqref{eq:equal_mean} or Remark \ref{rm:same_avg} is true.
Furthermore, either $E[T(x_j,x_{j+1})]$ of \eqref{eq:wokia_mean-rm} or $E[\ddot{T}(x_j,x_{j+1})]$ of \eqref{eq:wokia_mean-1} has the exact same form as (1) in Kleinrock's work \cite{Kleinrock2002}. Hence,
the average end-to-end delay derived by Kleinrock is still valid.


\begin{remark}\label{rm:less}
The approximated jitter under AKIA is higher than the approximated jitter under KIA, i.e.,
\begin{equation}\label{eq:std-wo-w}
  \tilde{\sigma}_{Z} > \tilde{\sigma}_{\ddot{Z}}.
\end{equation}

\end{remark}
For the proof of Remark \ref{rm:less}, see Appendix \ref{sec:less}. Finally, Propositions \ref{prop:delay_akia} and \ref{prop:delay_kia} show that the approximated end-to-end delay under AKIA and KIA are two different $C(\mathbf{p}, \boldsymbol \theta)$ distributed rvs, i.e., $Z \sim C(\mathbf{p}_Z, \boldsymbol \theta_Z)$ and $Z \sim C(\mathbf{1}, \boldsymbol \theta_W)$ approximately. Their distribution have simple forms of average delay, jitter and density functions.

\section{End-to-End Delay Results and Discussion}


In this section, I put the delay approximation methods under AKIA and KIA together and compare them
with simulation results from two topologically different networks under AKIA: a tandem network with one traffic flow (see Section \ref{sec:two-node}) and a 100-node hypothetical network with 1000 traffic flows (see Section \ref{sec:100-node}). Average end-to-end delay won't be considered because both methods produce the same approximation (see Remark \ref{rm:same_avg} in Section \ref{sec:w_kia}).



\subsection{Tandem Networks under AKIA}
\label{sec:two-node}

\begin{figure}[t]
\centering
\includegraphics[width=3.5in]{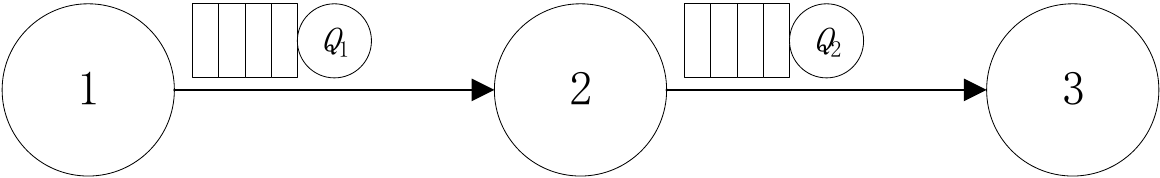}
\caption{Two-node tandem network.}
\label{fig:2-node}
\end{figure}

Let us start with a simplistic two-node tandem network under AKIA shown in Fig. \ref{fig:2-node}. The channel capacities of all links are the same, i.e., $c(1,2) = c(2,3) = c$. There is only one traffic flow $K_1 = \{s = 1, d = 3, \gamma\}$.
The total arrival rates at both queues are the same, i.e.,
\begin{equation}\label{eq:two_node_gamma}
\lambda(1,2) = \lambda(2,3) = \gamma.
\end{equation}
For demonstration purpose, let the average service times at both queues equal to one \cite{Pinedo1982}, i.e.,
\begin{equation}\label{eq:mu_c_1}
\mu c = 1.
\end{equation}

Such a network is essentially an $M/M/1 \rightarrow{} M/1$ system where any given packet has equal service times at two queues. Note that the same simulation setting has been done before by \cite{Mitchell1977,Boxma1979a,Pinedo1982} but is first reproduced for end-to-end delay performance. Moreover, let $\rho$ be the traffic load of the link $(1,2)$:
\begin{equation}\label{x}
\rho \mathop  = \limits^{(1)} \rho(1,2) \mathop  = \limits^{(2)} \frac{\lambda(1,2)}{\mu c} \mathop  = \limits^{(3)}  \frac{\gamma}{\mu c}  \mathop  = \limits^{(4)} \gamma.
\end{equation}
The $3^{\rm rd}$ equality holds because of \eqref{eq:two_node_gamma} and the $4^{\rm th}$ equality holds because of \eqref{eq:mu_c_1}. The values of $\rho$ that I considered
are 0.005, 0.05, 0.1, 0.2, 0.3, 0.4, 0.5, 0.6, 0.62, 0.6612, 0.7, 0.8, 0.9, 0.99. The number of packets simulated in each simulation run was 1 million (1,000,000).

Based on the methods I described in Sections \ref{sec:wo_kia} and \ref{sec:w_kia},
the end-to-end delay $Z$
under AKIA can be described by $C(\mathbf{p}_Z, \boldsymbol{\theta}_Z)$ approximately with
\begin{equation}\label{eq:pz_2h_ud_akia}
\mathbf{p}_Z = \{p_1 = \rho(1,2)=\rho, p_2 = \rho, 1\}
\end{equation}
and
\begin{equation}\label{eq:tz_2h_ud_akia}
\boldsymbol{\theta}_Z = \boldsymbol{\theta}_W \cup \theta_3,
\end{equation}
where
\begin{equation}\label{eq:tw_2h_ud_kia}
\boldsymbol{\theta}_W = \{\theta_1 = \mu c - \lambda = 1 - \rho, \theta_2 = 1 - \rho\}
\end{equation}
and
\begin{equation}\label{xx}
\theta_3 = \mu \overline{c} = \mu\frac{1}{\frac{1}{\overline{c}}}
\mathop  = \limits^{(1)} \mu\frac{1}{\frac{1}{c({1,2})}+\frac{1}{c({2,3})}} = \frac{1}{\frac{1}{\mu c({1,2})}+\frac{1}{\mu c({2,3})}} = \frac{1}{2}.
\end{equation}
The end-to-end delay $\ddot{Z}$ under KIA can be described by $C(\mathbf{1}, \boldsymbol{\theta}_W)$ (or equivalently $\ddot{Z} \sim Hypo(\boldsymbol{\theta}_W)$) approximately.

Furthermore, the Markov chains for $C(\mathbf{p}, \boldsymbol \theta)$
and $Hypo(\boldsymbol{\theta}_W)$ are shown in Fig. \ref{fig:markov_2hop}(a) and (b), respectively.

\begin{figure}[htp]
\centering

\subfloat[Markov chain for $C(\mathbf{p}_Z, \boldsymbol \theta_Z)$.]{%
  \includegraphics[clip,width=\columnwidth]{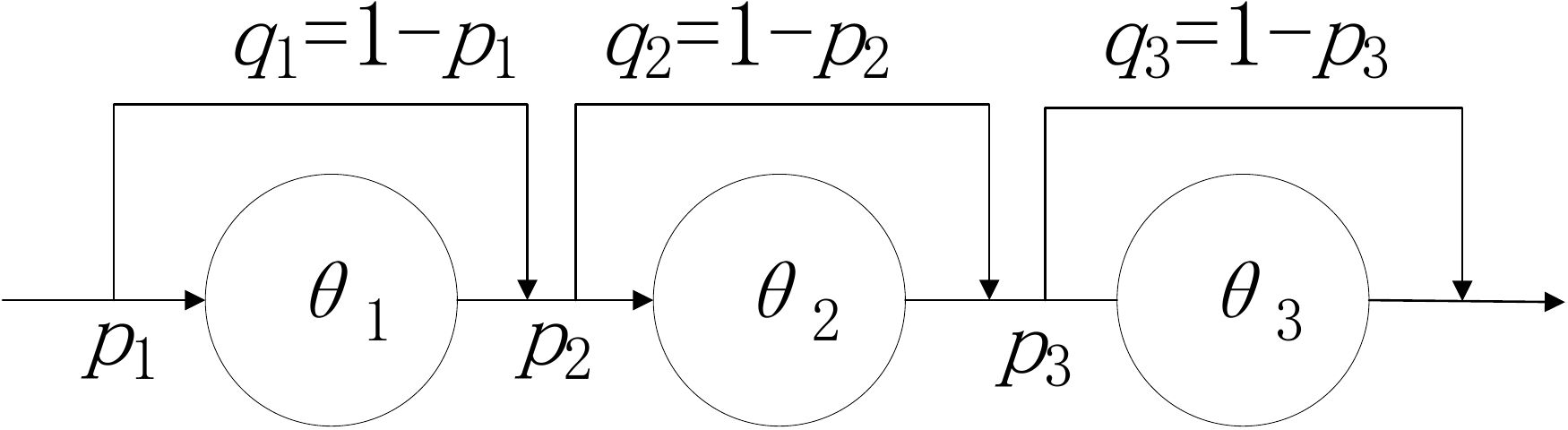}%
}

\subfloat[Markov chain for $Hypo(\boldsymbol \theta_W)$.]{%
  \includegraphics[clip,width=0.6\columnwidth]{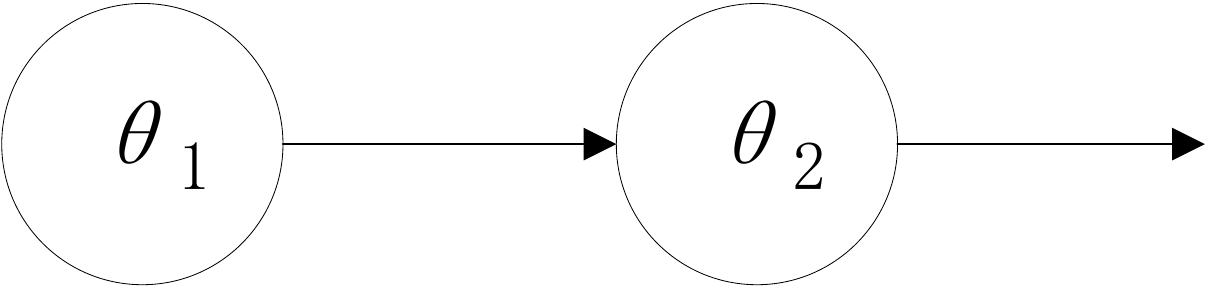}%
}

\caption{Markov chains for $C(\mathbf{p}_Z, \boldsymbol \theta_Z)$ and
$Hypo(\boldsymbol \theta_W)$ that approximate end-to-end delay distribution in a two-node tandem network
under AKIA and KIA, respectively.}
\label{fig:markov_2hop}
\end{figure}

\subsubsection{Jitter Approximation Comparison}

\begin{figure*}[t]
\centering
\includegraphics[width=7in]{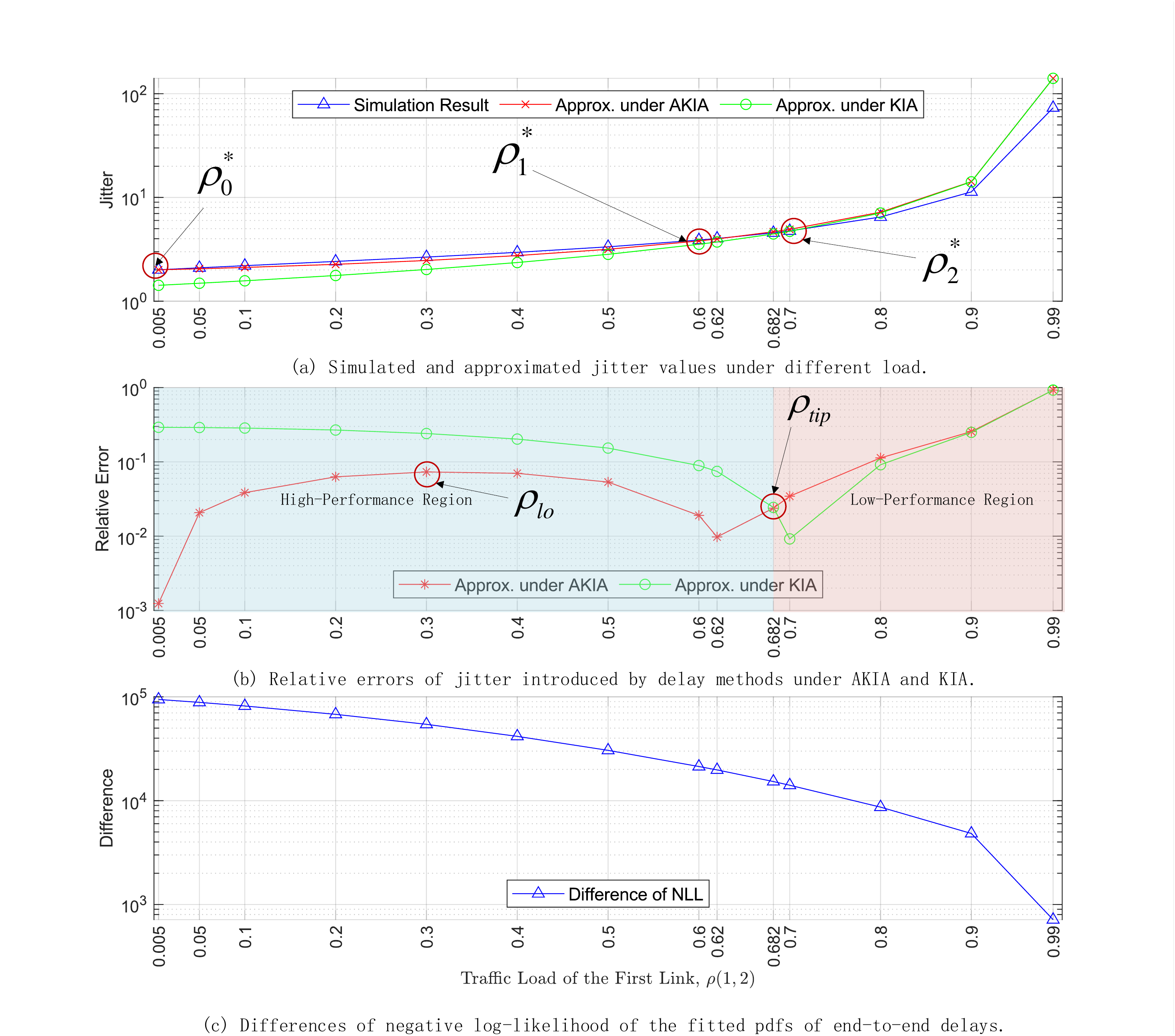}
\caption{Approximation performance comparison between the delay approximation methods under AKIA and KIA.}
\label{fig:approx_result_2h}
\end{figure*}

Based on \eqref{eq:wokia_mean-rm}, \eqref{eq:appr_jitter_ud_akia} and \eqref{eq:jitter_ud_kia}, the approximated average delay and jitter values under AKIA and KIA are:
\begin{equation}\label{eq:wokia_mean-rm_2h}
\tilde{\mu}_Z(\rho) = \tilde{\mu}_{\ddot{Z}}(\rho) = \frac{2}{\theta} = \frac{2}{1 - \rho},
\end{equation}
\begin{align}\label{eq:jitter_wo}\nonumber
  \tilde{\sigma}_{Z}(\rho)\approx&\sqrt{\frac{p_1(2-p_1)}{\theta^2_1}
  + \frac{p_2(2-p_2)}{\theta^2_2} + \frac{1}{\theta^2_3}}\\
  =& \sqrt{\frac{2\rho(2-\rho)}{(1-\rho)^2} + \frac{1}{(1/2)^2}}
\end{align}
and
\begin{equation}\label{eq:jitter_w_kia-1}
  \tilde{\sigma}_{\ddot{Z}}(\rho) \approx \sqrt{\sum\limits_{f = 1}^h {\frac{1}{\theta _f^2}}} = \sqrt{\frac{1}{\theta^2_2} + \frac{1}{\theta^2_2}}
  = \sqrt{\frac{2}{\theta^2}} = \sqrt{\frac{2}{(1-\rho)^2}},
\end{equation}
respectively. Moreover, $\tilde{\mu}_Z(\rho)$, $\tilde{\sigma}_Z(\rho)$ and
$\tilde{\sigma}_{\ddot{Z}}(\rho)$ are functions of traffic load $\rho$.

For $\tilde{\mu}_Z(\rho)$ in the $M/M/1- M/1$ system with equal service time,
Pinedo and Wolff \cite{Pinedo1982} conjectured that
there exists a $\rho^*$ such that for $\rho < \rho^*$, the expected delay will be longer than the
approximated delay, i.e., $\mu_Z(\rho) > \tilde{\mu}_Z$; and
$\rho > \rho^*$ the other way around. This conjecture was justified by simulation in their work \cite{Pinedo1982}.

For $\tilde{\sigma}_Z(\rho)$, I may also conjecture that
there also exists $\rho_1^*$ such that for $\rho < \rho_1^*$,
$\sigma_Z(\rho) > \tilde{\sigma}_Z(\rho)$; and
$\rho < \rho_1^*$ the other way around. There exists $\rho^*_2$ that has a similar
behavior for $\tilde{\sigma}_{\ddot{Z}}(\rho)$. Because $\tilde{\sigma}_{Z} > \tilde{\sigma}_{\ddot{Z}}$ (see
Remark \ref{rm:less}), we can infer that $\rho^*_1 < \rho^*_2$. Moreover, there should exist another
$\rho^*_0 = 0$ such that $\sigma_Z(0) = \tilde{\sigma}_Z(0)$ as $\rho \to 0$, we have $Z \to R$.

Fig. \ref{fig:approx_result_2h}(a) shows the simulated jitter values together with
approximated jitters values using $\tilde{\sigma}_Z(\rho)$
and $\tilde{\sigma}_{\ddot{Z}}(\rho)$ over traffic load $\rho$.
For each simulation, I assume that the values of the sample standard deviation (jitter) based on 100,000 samples adequately represent the true $\sigma_{Z}(\rho)$ values.
Fig. \ref{fig:approx_result_2h}(a) shows that
\begin{enumerate}
  \item the values of $\sigma_{Z}(\rho)$ are closely approximated by $\tilde{\sigma}_Z(\rho)$;
  \item $\tilde{\sigma}_Z(\rho) > \tilde{\sigma}_{\ddot{Z}}(\rho)$;
  \item $\rho^*_1 \approx 0.6$ and $\rho^*_2 \approx 0.7$.
\end{enumerate}


In each simulation run, we take the relative error to measure
the approximation accuracy of $\tilde{\sigma}_{Z}(\rho)$ and $\tilde{\sigma}_{\ddot{Z}}(\rho)$:
\begin{equation}\label{eq:rel_err}
\text { relative error }=\epsilon_{\tilde{\sigma}_{X}}(\rho) = \left|\frac{\tilde{\sigma}_{X}(\rho)- \sigma_{Z}(\rho)}{\sigma_{Z}(\rho)}\right|,
\end{equation}
where $X$ is a r.v. that can be $Z$ or $\ddot{Z}$. $\epsilon_{\tilde{\sigma}_{X}}(\rho)$
in \eqref{eq:rel_err} is a function of traffic load $\rho$ as well.
If $\rho^*_0$, $\rho^*_1$ and $\rho^*_2$ exist, then
we have $\epsilon_{\tilde{\sigma}_{Z}}(\rho^*_0) = \epsilon_{\tilde{\sigma}_{Z}}(0) = 0$, $\epsilon_{\tilde{\sigma}_{Z}}(\rho^*_1) = 0$ and
$\epsilon_{\tilde{\sigma}_{\ddot{Z}}}(\rho^*_2) = 0$. According to the intermediate value theorem,
I may conjecture that there exists
a $\rho_{tip} \in [\rho^*_1, \rho^*_2]$ such that if $\rho < \rho_{tip}$,
$\epsilon_{\tilde{\sigma}_{Z}}(\rho^*_1) < \epsilon_{\tilde{\sigma}_{\ddot{Z}}}(\rho^*_1)$;
and if $\rho < \rho_{tip}$, the reverse is true. Based on the above conjectures that $\epsilon_{\tilde{\sigma}_{Z}}(0) = 0$, $\epsilon_{\tilde{\sigma}_{Z}}(\rho^*_1) = 0$ and Rolle's theorem, I may further
conjecture that there exists
a $\rho_{lo} \in [0, \rho^*_1]$ such that $\epsilon_{\tilde{\sigma}_{Z}}(\rho_{tip})$
gives a poorest approximation with the highest relative error, i.e.,
\begin{equation}\label{xx}
\epsilon'_{\tilde{\sigma}_{Z}}(\rho_{tip}) = 0.
\end{equation}
Fig. \ref{fig:approx_result_2h}(b) shows the
relative errors introduced
by $\epsilon_{\tilde{\sigma}_{Z}}(\rho)$ and $\epsilon_{\tilde{\sigma}_{\ddot{Z}}}(\rho)$.
From this figure, it appears that $\rho_{tip}$ and $\rho_{lo}$ occur at $0.682$
($\epsilon_{\tilde{\sigma}_{Z}}=\epsilon_{\tilde{\sigma}_{\ddot{Z}}}= 0.0238$)
and $0.3$ ($\epsilon_{\tilde{\sigma}_{Z}}= 0.0732$ and
$\epsilon_{\tilde{\sigma}_{\ddot{Z}}}= 0.2408$), respectively. I call $(0,\rho_{tip})$ the high-performance region
$R_{hi}$ for $\epsilon_{\tilde{\sigma}_{Z}}(\rho)$
and $(\rho_{tip},1)$ the low-performance region $R_{lo}$
for $\epsilon_{\tilde{\sigma}_{Z}}(\rho)$. These two regions are show in Fig. \ref{fig:approx_result_2h}(b) as well. Moreover, when the traffic load approaches to one,
it can be observed that the relative errors introduced by both methods increase sharply,
e.g., $\epsilon_{\tilde{\sigma}_{Z}}(0.99) = 0.9246$ and
$\epsilon_{\tilde{\sigma}_{\ddot{Z}}}(0.99) = 0.9245$ when $\rho = 0.99$.

\subsubsection{Density Function Approximation Comparison}
\label{sec:densify_compari}

%

%
%

In such a two-node tandem network, we have 14 data sets generated from 14 different $\rho$ values. For each data set, we have 1 million delay samples $\{z_1, z_2, \cdots, z_{10^5}\}$. We compare the fitted delay density distributions under AKIA and KIA using the negative log-likehood (NLL), i.e.,
\begin{equation}\label{xx}
NLL(X) = - \sum_{i=1}^{10^6} \log L_X(z_i),
\end{equation}
where $X$ is a rv with a pdf and
$z_i$ is a delay sample. Let us precise that $\log L_Z(z_i)$ is the log-likelihood function (LL) using the pdf $\tilde{f}_Z(t)$
with $\mathbf{p}_Z$ of \eqref{eq:pz_2h_ud_akia} and $\boldsymbol{\theta}_Z$ of \eqref{eq:tz_2h_ud_akia}; and
$\log L_{\ddot{Z}}(z_i))$ is the LL using the pdf $\tilde{f}_{\ddot{Z}}(t)$
with $\boldsymbol{\theta}_W$ of \eqref{eq:tw_2h_ud_kia}. Smaller NLL means a better fit.
Denote by $\Delta$ the difference between NLL values under KIA and AKIA:
\begin{equation}\label{xx}
  \Delta = NLL(\ddot{Z}) -  NLL(Z),
\end{equation}
We see in Fig. \ref{fig:approx_result_2h}(c) that the differences for all traffic loads are all positive, suggesting the superiority of $\tilde{f}_Z(t)$ when PSNs have AKIA.

%
%
%
%
%

Finally, let us focus on the ccdfs when 1) $\tilde{\sigma}_Z$ has the worst performance in the
high-performance region $R_{hi}$ ($\rho = \rho_{lo} = 0.3$) and 2)
$\tilde{\sigma}_{\ddot{Z}}$ has the
best performance in the low performance region
$R_{lo}$ ($\rho = \rho_2 = 0.7$).
Figs. \ref{fig:two_node tandem_r_lo} and \ref{fig:two_node tandem_r_2}
shows the simulation and approximation results under these two $\rho$s.
The empirical ccdfs of end-to-end delays from simulation data, the approximated ccdfs of \eqref{eq:ph_cdf_delay}
(with $\mathbf{p}_Z$ and $\boldsymbol{\theta}_Z$ being \eqref{eq:pz_2h_ud_akia} and \eqref{eq:tz_2h_ud_akia}, respectively), and the ccdfs of \eqref{eq:ccdf_w_kia} (with $\boldsymbol{\theta}_W$ being \eqref{eq:tw_2h_ud_kia}) are plotted in blue, red and black dotted lines, respectively. The $y$-axis uses logarithmic scale. When $\rho = \rho_{lo}$, the simulation results show
the accuracy of $\tilde{S}_Z(t)$ in \eqref{eq:ph_cdf_delay}. On the other hand, the ccdf of
$\tilde{S}_{\ddot{Z}}(t)$ in \eqref{eq:ccdf_w_kia} significantly deviate from the simulation results. When $\rho = \rho_{lo}$, it is hard to differentiate the ccdfs $\tilde{S}_Z(t)$ and $\tilde{S}_{\ddot{Z}}(t)$. This is because, first, the compound queueing delay $W$ dominates the end-to-end delay, and second, the distribution functions of $W$ are the same from both delay approximation methods.

Based on the above results on goodness-of-fitness statistics and ccdfs, I conjecture that
\begin{conj}
In PSNs under AKIA, the delay approximation method under AKIA
significantly improves the approximation accuracy in the high-performance region. However,
both methods under AKIA and KIA have
similar approximation performance in the low-performance region. Moreover, their approximation may
significantly deviate from actual values.
\end{conj}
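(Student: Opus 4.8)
The plan is to reduce all three claims to the single structural fact that the two approximations differ \emph{only} in how they treat the transmission component, and then to analyze the two extremes $\rho\to 0$ and $\rho\to 1$, where the problem collapses, interpolating qualitatively in between. Writing $Z=W+R$ and $\ddot{Z}=W+\ddot{R}$ with $\ddot{R}=\sum_{j}\ddot{R}(x_j,x_{j+1})$, both decompositions share the \emph{identical} compound queueing delay $W\sim C(\mathbf{p}_W,\boldsymbol\theta_W)$; the only discrepancy is that $R=V/\overline{c}$ is a single exponential built from one packet length, whereas $\ddot{R}$ is a hypoexponential built from $h$ independent lengths. I would first make this precise by subtracting the ccdfs \eqref{eq:ph_cdf_delay} and \eqref{eq:ccdf_w_kia} and showing that the difference is carried entirely by the extra phase $\theta_R=\mu\overline{c}$ present under AKIA.

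For the high-performance region I would prove the low-load limit rigorously. As $\rho\to 0$ every queue empties, so the true end-to-end delay under AKIA converges in distribution to the pure transmission time $\sum_j V/c(x_j,x_{j+1})=V/\overline{c}=R$, precisely because the \emph{same} length $V$ is carried across all hops. This limit coincides with $C(\mathbf{p}_Z,\boldsymbol\theta_Z)\to C(1,\theta_R)$ obtained from Proposition \ref{prop:delay_akia} when $\rho_j\to 0$ (each phase with $p_j=\rho_j$ collapses and only the $\theta_R$ phase survives), while the KIA approximation tends to a hypoexponential in $h$ independent exponentials, a genuinely different law. Hence $\epsilon_{\tilde\sigma_Z}(\rho)\to 0$ while $\epsilon_{\tilde\sigma_{\ddot Z}}(\rho)$ stays bounded away from $0$, which establishes the accuracy gap at $\rho=0$ and, by continuity, throughout $R_{hi}$; it also explains why $\Delta>0$ for the density fits in Fig. \ref{fig:approx_result_2h}(c).

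For the low-performance region I would use a heavy-traffic dominance argument. Since $E[W(x_j,x_{j+1})]=\rho_j/(\mu c_j(1-\rho_j))$ diverges as $\rho_j\to 1$ while each service mean $1/(\mu c_j)$ stays fixed, the compound queueing delay $W$ asymptotically dominates both $Z$ and $\ddot{Z}$, and the ratio of the service contribution to the queueing contribution vanishes. Because $W$ is common to the two approximations, $\tilde{S}_Z$ and $\tilde{S}_{\ddot Z}$ converge to the same limiting shape and become indistinguishable, which is exactly the ordering $\rho_1^*<\rho_2^*$ and the coincidence of relative errors seen near $\rho=1$. The last assertion, that both may deviate substantially from the truth, I would trace to the violation of A1--A3: in heavy traffic the correlation between interarrival times and packet lengths at downstream queues (p.~210 in \cite{Bertsekas1992}) is strongest, so the $M/M/1$-based approximation of $W$ itself degrades, and this shared error inflates both methods' relative errors as $\rho\to 1$.

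The hard part will be the interior regime and the word ``significantly''. Because the paper itself notes that the true delay law under AKIA admits no tractable closed form, no exact error curve $\epsilon_{\tilde\sigma_Z}(\rho)$ is available, so a fully rigorous derivation of the interpolating behavior (the existence of $\rho_{tip}$ and $\rho_{lo}$, and quantitative ``significance'' thresholds) seems out of reach by elementary means. I would therefore treat the two endpoint limits as the rigorous core, combine them with the sign of $\tilde\sigma_Z-\tilde\sigma_{\ddot Z}$ from Remark \ref{rm:less} and the intermediate-value and Rolle arguments already sketched in the text, and leave the precise thresholds to be certified by the simulation study rather than proved in closed form --- which is exactly why the statement is posed as a conjecture.
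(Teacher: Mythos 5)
Your proposal matches the paper's own justification of this conjecture, which is not a formal proof but simulation evidence (NLL/goodness-of-fit statistics and ccdf comparisons) organized around exactly the two heuristics you use: the low-load limit $Z \to R$ (the paper's observation that as $\rho \to 0$ the delay collapses to the compound service time, where AKIA's $C(1,\theta_R)$ is exact and KIA's hypoexponential is not) and the heavy-traffic regime where the shared compound queueing delay $W$ dominates both approximations, making them indistinguishable yet both inaccurate as A1--A3 degrade. Your treatment of the endpoints is slightly more formalized than the paper's, but the decomposition, the key observations, and the deferral of the quantitative ``significance'' claims to simulation are essentially identical to the paper's reasoning.
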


\begin{figure*}[t]
\centerline{\subfloat[$\rho = \rho_{lo}$ ($\tilde{\sigma}_{Z}(\rho_{lo})$ has the highest relative error in $R_{hi}$): empirical ccdf, approximated ccdfs under AKIA and KIA.]{\includegraphics[width=3.2in]{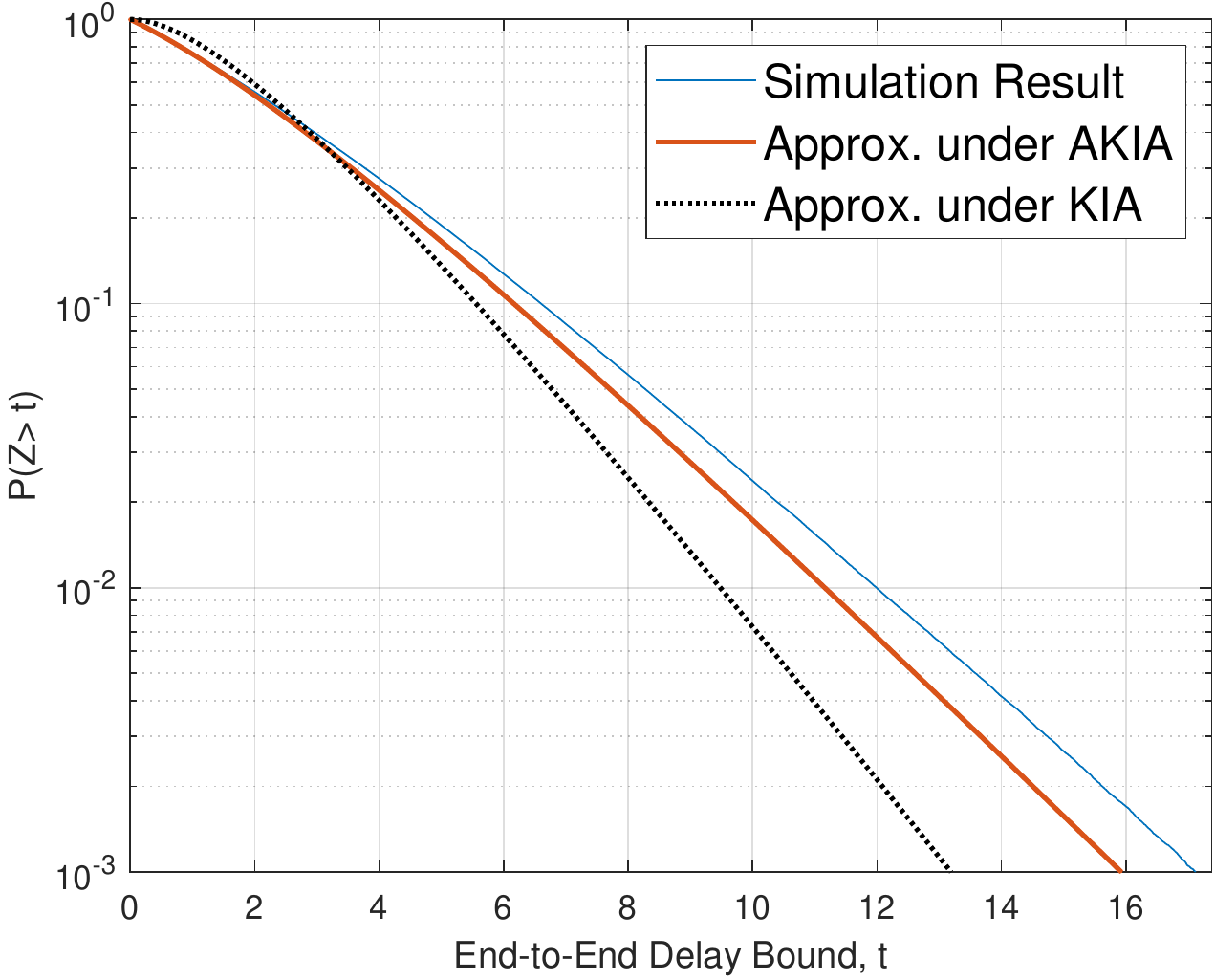}
\label{fig:two_node tandem_r_lo}}
\hfil
\subfloat[$\rho = \rho^*_{2}$ ($\tilde{\sigma}_{\ddot{Z}}(\rho_{2})$ has the lowest relative error in $R_{lo}$): empirical ccdf, approximated ccdfs under AKIA and KIA.]{\includegraphics[width=3.2in]{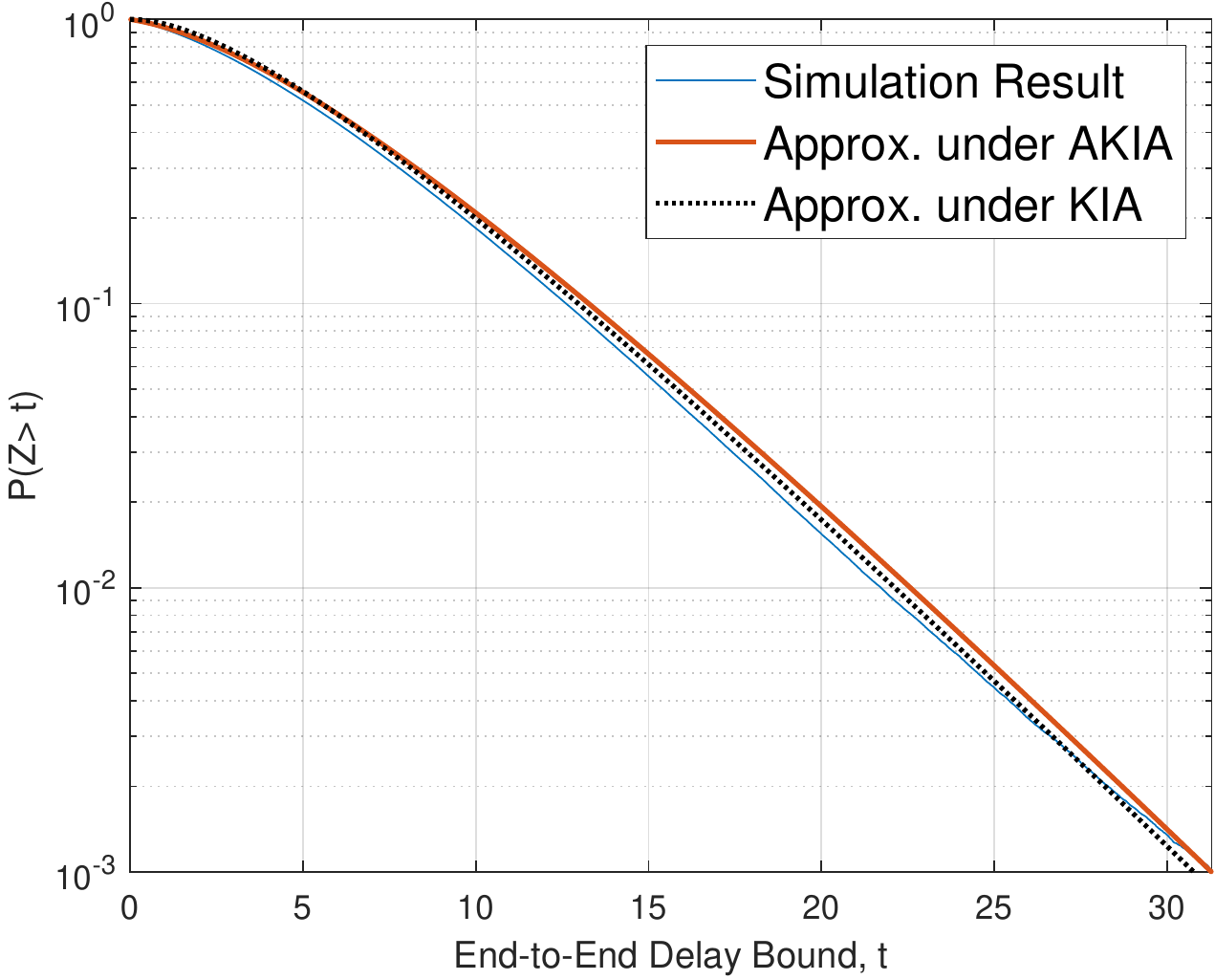}
\label{fig:two_node tandem_r_2}}}
\caption{Empirical and approximated ccdfs in a two-node tandem network.}
\label{fig:delay_perf}
\end{figure*}

%


\subsection{100-Node Network}
\label{sec:100-node}

\begin{figure*}[t]
\centering
\includegraphics[width=7in]{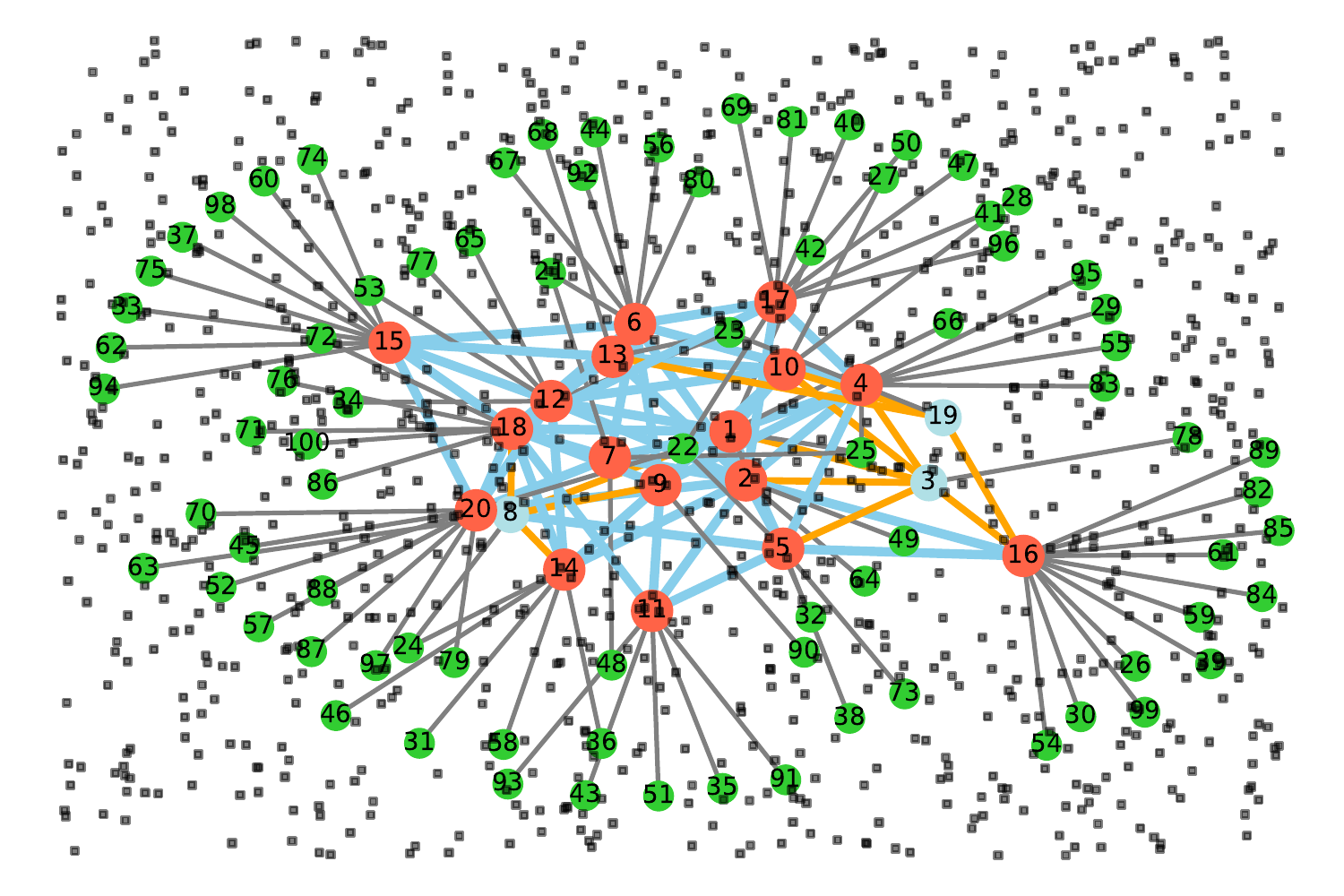}
\caption{Topology of a 100 nodes represented as circles and location of 1000 hosts represented as squares.}
\label{fig:100-node-topo}
\end{figure*}

\begin{figure*}[t]
\centering
\includegraphics[width=7in]{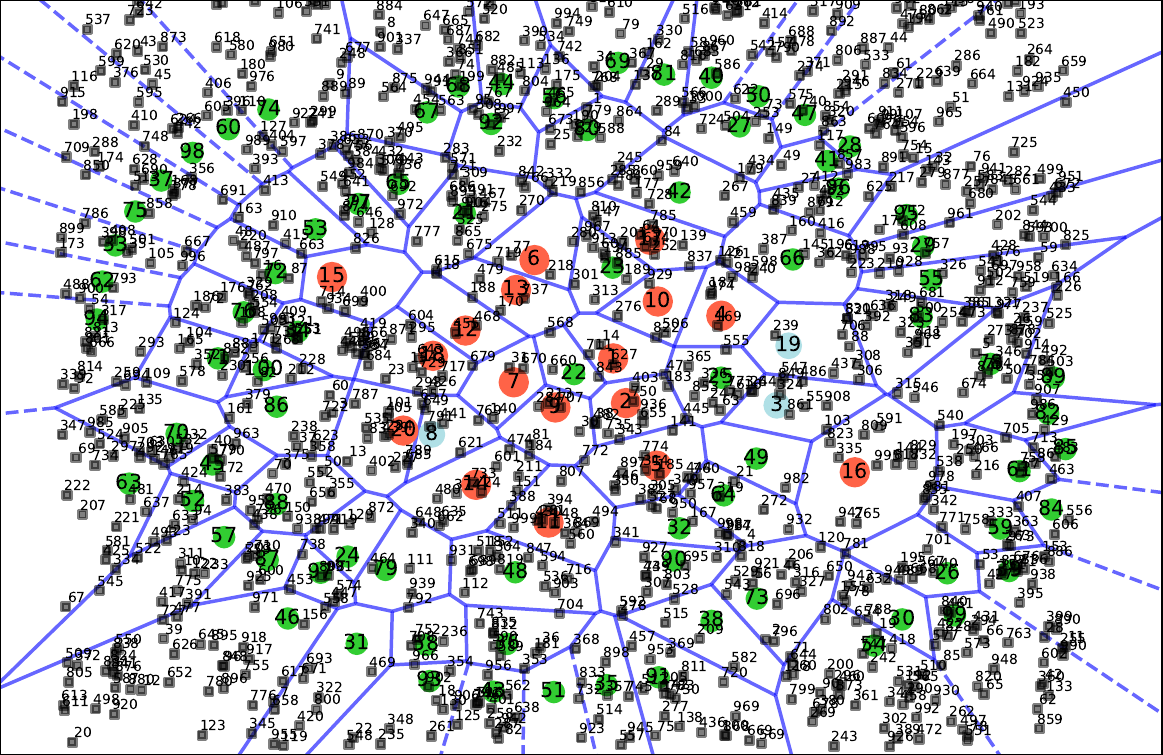}
\caption{Voronoi network topology; a Voronoi cell represents a subnet
that one node and a number of hosts are attached to.}
\label{fig:100-node-voro}
\end{figure*}


\begin{table}[]
\centering
\caption{Flows and their attributes}
\begin{tabular}{|r|p{0.4in}|p{0.4in}|p{0.4in}|p{0.4in}|p{0.6in}|}
\hline
\textbf{Flow}&\textbf{Source Host, $h_s$}&\textbf{Sink Host, $h_t$}
&\textbf{Source Node, $s$}&\textbf{Sink Node, $t$}&\textbf{Arrival Rate (pps), $\gamma$}\\\hline
1 	 & 	 1 	 & 	 2 & 80 & 73 & 1344.1 \\\hline
2 	 & 	 2 	 & 	 1 & 73 & 80& 1344.1\\\hline
3 	 & 	 3 	 & 	 4 & 86 & 32& 1344.1\\\hline
4 	 & 	 4 	 & 	 3 & 32 & 86& 1344.1\\\hline
$\cdots$ 	 & 	 $\cdots$ 	 & 	 $\cdots$ &$\cdots$&$\cdots$&$\cdots$ \\\hline
127 	 & 	 127 	 & 	 128 & 60 & 77 & 1344.1 \\\hline
$\cdots$ 	 & 	 $\cdots$ 	 & 	 $\cdots$ &$\cdots$&$\cdots$&$\cdots$ \\\hline
635 	 & 	 635 	 & 	 636 & 14 & 83 & 1344.1 \\\hline
$\cdots$ 	 & 	 $\cdots$ 	 & 	 $\cdots$ &$\cdots$&$\cdots$&$\cdots$ \\\hline
997 	 & 	 997 	 & 	 998 & 92 & 32  & 1344.1\\\hline
998 	 & 	 998 	 & 	 998 & 32 & 92 & 1344.1\\\hline
999 	 & 	 999 	 & 	 1000 & 11 & 40 & 1344.1\\\hline
1000 	 & 	 1000 	 & 	 999 & 40 & 11 & 1344.1\\\hline
\end{tabular}
\label{tab:flows}
\end{table}


Let us simulate a complex computer network that consists of a number of routers and hosts in this section.
A 100-node/router hypothetical computer network is considered; its topology $G(V, E)$ is a directed graph resembling an Internet Autonomous System (AS) network.
I used a built-in \emph{random\_internet\_as\_graph} function in the networkx toolbox \cite{Hagberg2008} to generate such a topology; the function is
implemented based on the work \cite{Elmokashfi2010}. There are three types of routers: red and blue
circles represent 10 Gbps and 1 Gbps core routers, green circles represent 100 Mbps edge routers.
I also let 1000 hosts locate uniformly
in the same area. The network topology of routers and the locations of hosts are shown in
Fig. \ref{fig:100-node-topo} The central routing directory $R$ is constructed by Dijkstra's algorithm.

A Voronoi tessellation is given in Fig. \ref{fig:100-node-voro}.
I use the term \emph{subnet} to refer to a Voronoi region of each router. Therefore, hosts in a Voronoi region are either directly connected or on the same subnet with a router in the same Voronoi region. One host is communicating with another
host (we have 500 host pairs or 1000 traffic flows) with a data rate of 2 Mbps. Because it was reported in \cite{Thompson1997} that the mean size of IP packets is 186 bytes, the same value is used for the average packet length $1/\mu$ and the arrival rate of any flow $\gamma=2 \times 10^6 / (186 \times 8)=1344.1$
packets per second (pps). Traffic flows of different host pairs with their associated routers and
arrival rate are listed in Table \ref{tab:flows}.

The simulation stops when the simulation time passes
50 seconds. That is, the simulation platform generated about $50 \times 1344.1 = 67200 $ end-to-end delay samples for every flow, so in total we have $67200 \times 1000 = 67.2$ million
delay samples. 964 flows with path length $h > 1$ were analyzed. 36 flows were not considered because either their path length is one, resulting the same delay approximation under AKIA and KIA (see Remark \ref{rm:one-hop}) or their source and destination hosts are in the same subnet, leaving their end-to-end delays all zero.

Fig. \ref{fig:relative_errors_100} shows the
relative errors introduced
by $\epsilon_{\tilde{\sigma}_{Z}}$ and $\epsilon_{\tilde{\sigma}_{\ddot{Z}}}$ for 964 flows in log scale.
I found 919 flows in the high-performance region and 45 flows in the low-performance region.
In other words, 95.33\% ($919/964$) of jitter approximation made by $\tilde{\sigma}_{Z}$ outperforms
$\tilde{\sigma}_{\ddot{Z}}$. Moreover,
$\tilde{\sigma}_{Z}$ introduces relative errors
in the range of $6.74\times 10^{-6}$ to $0.026$ while
$\tilde{\sigma}_{\ddot{Z}}$ introduces relative errors
in the range of $2.02\times 10^{-4}$ to $0.314$. In such a simulation setting,
we may conclude $\tilde{\sigma}_{Z}$
significantly reduces the approximation error by 12.08 times.

We again compare the fitted delay distributions using the difference between NLL under KIA and AKIA. We see in Fig. \ref{fig:nll_100} that differences are positive for the 964 traffic flows with no exceptions.
That means the distribution using the delay approximation method under AKIA always has smaller NLL values so
$\tilde{S}_Z(t)$ better fits the empirical distributions.
Finally, let us focus on the ccdfs when 1) $\tilde{\sigma}_Z$ has the highest relative error in the
high-performance region $R_{hi}$ (flow $127$: $\epsilon_{\tilde{\sigma}_Z} = 0.026$,
$\epsilon_{\tilde{\sigma}_{\ddot{Z}}} = 0.2199$) and 2) $\tilde{\sigma}_{\ddot{Z}}$ has the lowest
relative error in the low performance region
$R_{lo}$ (flow $635$: $\epsilon_{\tilde{\sigma}_Z} = 0.010$,
$\epsilon_{\tilde{\sigma}_{\ddot{Z}}} = 2.02 \times 10^{-4}$).
Other information of flows $127$ and $635$ is shown in Table \ref{tab:flows}.
Figs. \ref{fig:ccdf_r1_100} and \ref{fig:ccdf_r2_100}
show the simulation and approximation results for these two flows.
It is shown in the figures that the
simulation and the approximation results using $\tilde{S}_Z(t)$ are in good agreement for these flows even $\tilde{\sigma}_Z$ made the worst approximation for them.
On the other hand, we observe a clear
difference between $\tilde{S}_{\ddot{Z}}(t)$ and the empirical ccdf in Fig. \ref{fig:ccdf_r1_100}
but such a difference becomes unnoticeable in Fig. \ref{fig:ccdf_r2_100}.
This observation confirms the conjecture I made in Section \ref{sec:densify_compari}.

\subsection{Effect of the Service Time Dependence}

By investigating the average packet delay via extensive simulation\footnote{Interested readers can refer to Section 3.5 of \cite{Kleinrock1964} for results on richer topologies.}, Kleinrock argued that \cite{Kleinrock1964}:
\begin{klein}
If there is sufficient mixing of traffic, then
the dependence effect may be small, resulting in an effect of restoring the independence of interarrival times and packet lengths.
\end{klein}

Kleinrock's observation is partially justifiable. Because the approximated average delays using \eqref{eq:wokia_mean-rm} and \eqref{eq:wokia_mean-1} are equal (a result of Remark \ref{rm:same_avg}),
On the other hand,
the highest relative error in the high-performance region in a two-node tandem network of Fig.\ref{fig:2-node} is $0.0732$ (when $\rho=0.3$)
while the highest error in the 100-node network in high- and low-performance regions is $0.027$ for flow $127$. The approximated
ccdf of flow $127$ using $\hat{S}_Z(t)$ is clearly much closer to the empirical ccdf than the
approximated ccdf when $\rho=0.3$ in a two-node tandem network. Therefore, it may be more appropriate to revise Kleinrock's observation as follows:
\begin{klein_rv}
If there is sufficient mixing of traffic, then
the dependence effect may be small, resulting in an effect of restoring the independence of interarrival times \st{and packet lengths}.
\end{klein_rv}
The above observation serves as the basis for the assumption A1 I made in Section \ref{sec:packet}. Moreover, the assumptions A2 and A3 may also be reasonably valid based on this observation.

\begin{figure*}[t]
\centering
\includegraphics[width=7in]{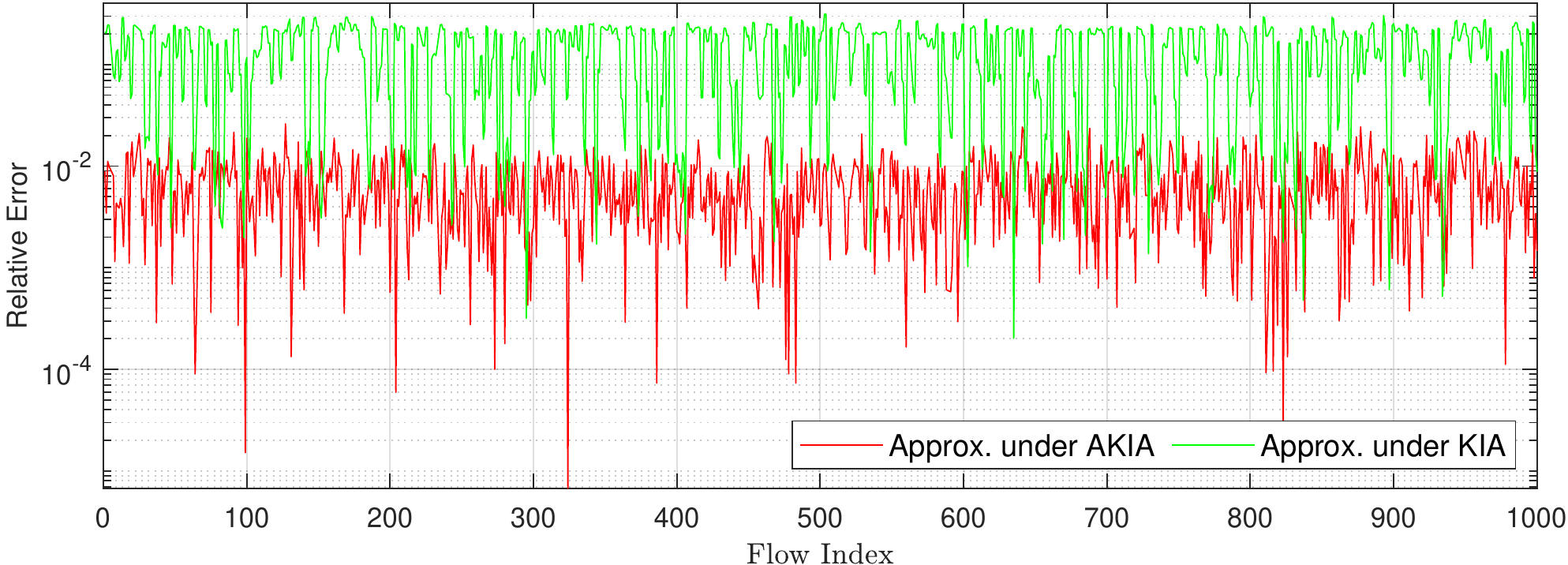}
\caption{Relative errors introduced by the approximation method under AKIA and KIA for 964 flows.}
\label{fig:relative_errors_100}
\end{figure*}

\begin{figure*}[t]
\centering
\includegraphics[width=7in]{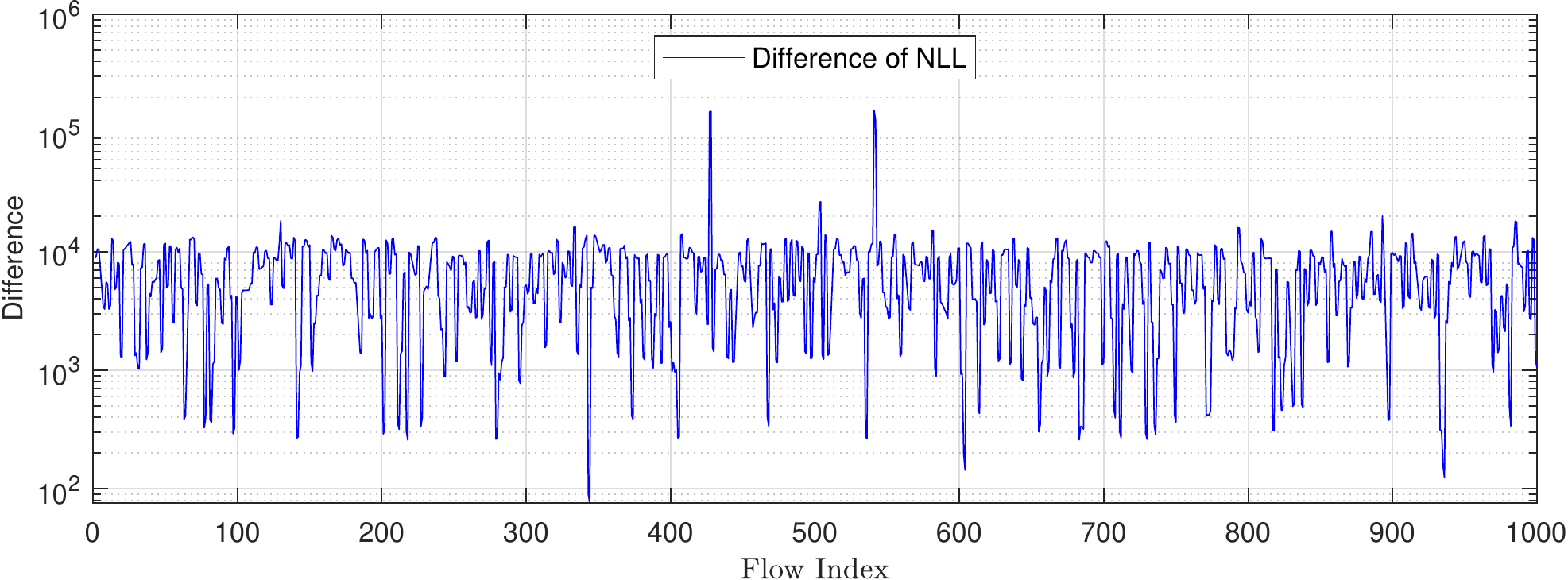}
\caption{Difference of the negative log-likelihood of the fitted pdfs $\tilde{f}_Z(t)$ and $\tilde{f}_{\ddot{Z}}(t)$ of end-to-end delays.}
\label{fig:nll_100}
\end{figure*}

\begin{figure*}[t]
\centerline{
\subfloat[Flow 127 ($\tilde{\sigma}_{Z_{127}}$ has the highest relative error in $R_{hi}$): empirical ccdf, approximated ccdfs under AKIA and KIA.]{\includegraphics[width=3.2in]{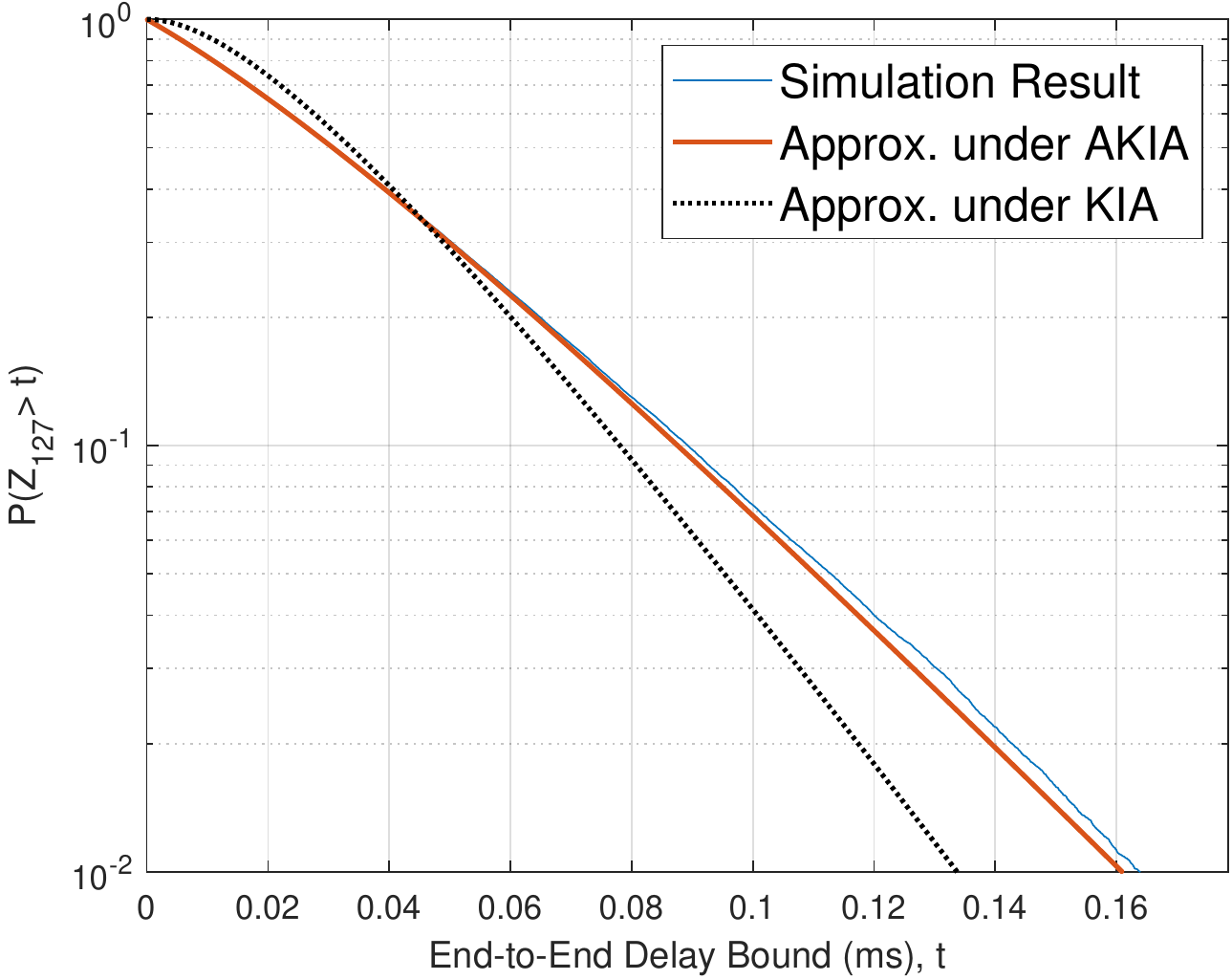}
\label{fig:ccdf_r1_100}}
\hfil
\subfloat[Flow 635 ($\tilde{\sigma}_{\ddot{Z}_{127}}$ has the lowest relative error in $R_{lo}$): empirical ccdf, approximated ccdfs under AKIA and KIA.]{\includegraphics[width=3.2in]{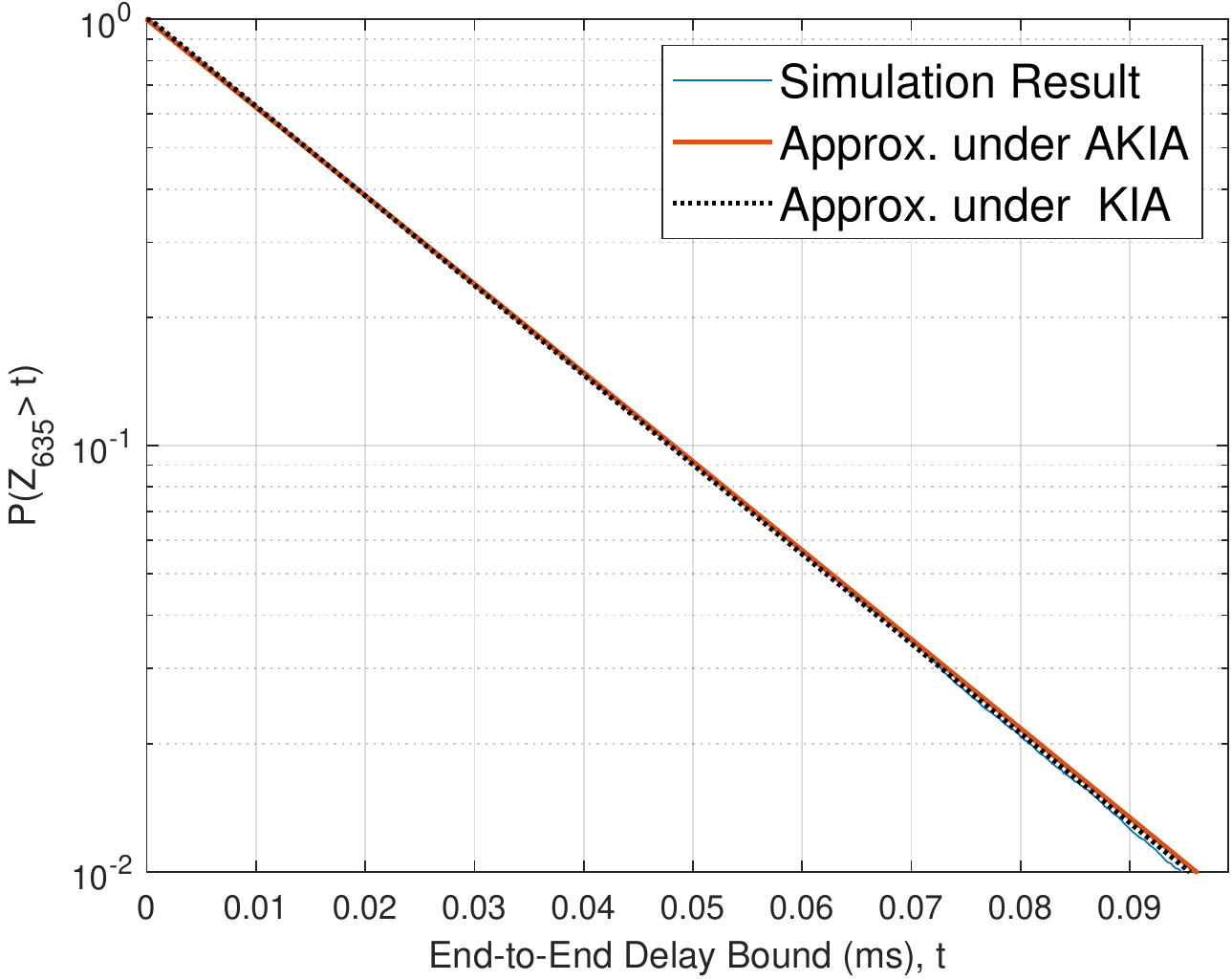}
\label{fig:ccdf_r2_100}}}
\caption{Empirical and approximated ccdfs in a 100-node network with 1000 traffic flows.}
\label{fig:delay_perf}
\end{figure*}

\section{Conclusion}

In this paper, I give an answer to the question on the end-to-end delay approximation in packet-switched networks
when packet lengths are kept unchanged when they traverse across networks; the question was raised by Kleinrock in 1961. In particular, I identify a new phase-type distribution, which I derived in my previous work, as a new phase-type distribution.
Using this new distribution, I bridge some results from the network flow theory and the queueing
theory; and then propose a generalized approximation method for the end-to-end delay (average delay, jitter and delay
distribution) of every stochastic flow in queueing networks with dependent or independent service
time of any size. 
Simulation results confirm the accuracy of this new approximation method.
Finally, I observe that the influence of the correlations due to dependent service times can be significantly reduced in networks with a larger number of nodes and complex traffic mixing.
This observation suggests the practicality of my model in real and complex networks.


\appendices

\section{Properties of $C(\mathbf{p}, \boldsymbol \theta)$}
\label{sec:prop}


%

\begin{figure*}[t]
\centering
\includegraphics[width=6in]{./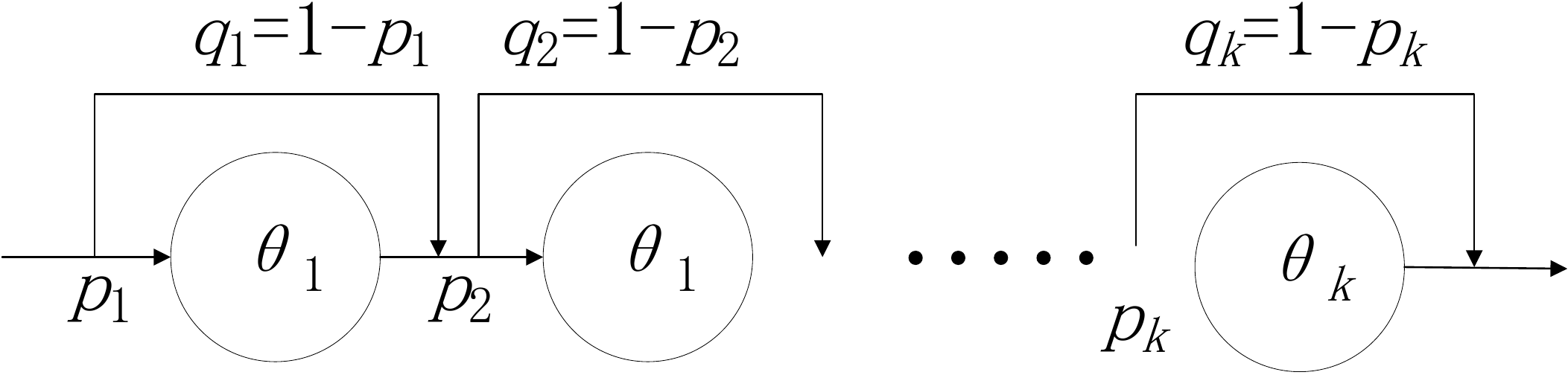}
\caption{$h$-phase Markov chain for $C(\mathbf{p},\boldsymbol \theta)$ with $\mathbf{p} = \{p_1, p_2, \cdots, p_h\}$ and $\boldsymbol \theta = \{\theta_1, \theta_2, \cdots, \theta_h\{$.}
\label{fig:markov}
\end{figure*}


Since a phase-type distribution is defined as the time to absorption in a specified continuous-time
Markov chain, I demonstrate another method to generate the rv $Z=\sum_{f=1}^h U_{f}$. Consider a $h$-phase Markov chain in Fig. \ref{fig:markov}. This Markov chain has
the property: the time in phase $f~(f \le h)$ is either exponential with mean $1/\lambda_f$ (with probability $p_f$) or zero (with probability $q_f = 1 - p_f$).
Let $Z$ denote the time until the process leaves the system. Then $Z$ is a phase-type distributed rv with a pdf of \eqref{eq:ph_pdf}.

Remark \ref{pro:deg_exp_erl} below is used in Sections \ref{sec:wo_kia} and \ref{sec:w_kia}:

\begin{remark}\label{pro:deg_exp_erl}
$C(\mathbf{p}, \boldsymbol \theta)$ generalizes the 1) degenerate with point
mass at zero, 2) exponential, 3) hypoexponential and 4) Erlang distributions.
\end{remark}

It is straightforward to show the generalization property of $C(\mathbf{p}, \boldsymbol \theta)$ in the above remark. $C(\mathbf{p}, \boldsymbol \theta)$ will be in the form of a degenerate distribution with point mass at zero if $h=1$ and $p_1 = 0$; it will be an exponential distribution if $h=1$ and $p_1 = 1$; it will be a hypoexponential distribution if $h\ge 2, \forall p_f = 1 (f \le h)$ and ${\theta{}}_f\neq{\lambda{}}_g (f\neq g, \forall f, g \le h)$; it will be the Erlang distribution if $h\ge 2, \forall p_f = 1 (f \le h)$ and ${\theta{}}_f={\lambda{}}_g (f\neq g, \forall f, g \le h)$.

\section{Average End-to-End Delay \eqref{eq:wokia_mean-rm}}
\label{sec:proof_avg}

According to \eqref{eq:12-mean}, I have
\begin{align}\nonumber\label{a}
E[Z_{st}] &\mathop  = \limits^{(1)} \sum\limits_{f = 1}^{h+1} {\frac{{{p_f}}}{{\theta_f}}} \mathop  = \limits^{(2)} \sum\limits_{f = 1}^{h} {\frac{{{p_f}}}{{\theta_f}}} + \frac{1}{\mu \overline{c}} \mathop  = \limits^{(3)} \sum\limits_{f = 1}^{h}\left({\frac{{{p_f}}}{{\theta_f}}} + \frac{1}{\mu c_{s_f s_{f+1}}}\right)\\\nonumber
& \mathop  = \limits^{(4)} \sum\limits_{f = 1}^{h}\left({\frac{{\lambda_{s_f s_{f+1}}}}{\mu c_{s_f s_{f+1}} }} \frac{1}{\mu c_{s_f s_{f+1}}-\lambda_{s_f s_{f+1}}}+ \frac{1}{\mu c_{s_f s_{f+1}}}\right)\\
&\mathop  = \limits^{(5)} \sum\limits_{f = 1}^h {\frac{{{1}}}{{\mu_{s_f s_{f+1}}C - \lambda_{s_f s_{f+1}}}}}
\end{align}
The $3^{\rm rd}$ equality holds because of \eqref{eq:c_ij_sk}.


\section{Proof of Remark \ref{rem:exp}}
\label{sec:proof_exp}

Let us have a lemma below:

\begin{lemma}\label{rm:another_v}
  Let $Z_1 \sim C(p_1, \theta_1)$ and $Z_2 \sim C(p_2, \theta_2)$. If
  the following condition holds:
  \begin{equation}\label{eq:cond}
    \theta_2 - \theta_1 = p_1 \theta_2,
  \end{equation}
then $Z=Z_1+Z_2$ and $Z \sim C(p, \theta)$ with
\begin{equation}\label{eq:cond_p}
  p={{ \frac{{{\Delta p_1 + p_1 p_2}{\theta _1}}}{\Delta}} }.
\end{equation}
and
\begin{equation}\label{eq:cond_theta}
\theta=\theta_1.
\end{equation}

%
%
%
%
\end{lemma}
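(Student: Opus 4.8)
The plan is to pass to Laplace--Stieltjes transforms, where the single-phase building block is especially simple. From the pdf $f_{U_f}(t)=p_f\theta_f e^{-\theta_f t}+q_f\delta(t)$ of a $C(p_f,\theta_f)$ variable, its transform is
\[
E\!\left[e^{-sZ_f}\right]=q_f+p_f\,\frac{\theta_f}{s+\theta_f}=\frac{\theta_f+q_f s}{s+\theta_f},\qquad f\in\{1,2\}.
\]
Because $Z_1$ and $Z_2$ are independent, the transform of $Z=Z_1+Z_2$ is the product
\[
E\!\left[e^{-sZ}\right]=\frac{(\theta_1+q_1 s)(\theta_2+q_2 s)}{(s+\theta_1)(s+\theta_2)}.
\]

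The decisive observation is that hypothesis \eqref{eq:cond}, $\theta_2-\theta_1=p_1\theta_2$, is equivalent to $\theta_1=q_1\theta_2$, and that this is precisely a pole--zero cancellation condition. First I would rewrite the numerator factor as $\theta_1+q_1 s=q_1(s+\theta_2)$, which cancels the denominator pole at $s=-\theta_2$. The transform then collapses to single-phase form,
\[
E\!\left[e^{-sZ}\right]=\frac{q_1(\theta_2+q_2 s)}{s+\theta_1}=\frac{\theta_1+q_1 q_2\,s}{s+\theta_1},
\]
where the last equality again uses $q_1\theta_2=\theta_1$. Comparing with the building-block transform identifies this as the transform of a $C(p,\theta_1)$ law, giving $\theta=\theta_1$ as in \eqref{eq:cond_theta} and $q=1-p=q_1 q_2$, i.e.\ $p=p_1+q_1 p_2$.

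It remains to reconcile $p=p_1+q_1 p_2$ with the stated expression \eqref{eq:cond_p}. Writing $\Delta=\theta_2-\theta_1$, hypothesis \eqref{eq:cond} gives $\Delta=p_1\theta_2$, while $\theta_1=q_1\theta_2$ gives $q_1=\theta_1/\theta_2$; combining these to eliminate $\theta_2$ yields $q_1 p_2=p_1 p_2\theta_1/\Delta$, so that $p=p_1+p_1 p_2\theta_1/\Delta=(\Delta p_1+p_1 p_2\theta_1)/\Delta$, matching \eqref{eq:cond_p}. I expect the only subtle point to be recognizing the hypothesis as the cancellation condition; the remaining manipulations are routine. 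A direct convolution of $f_{Z_1}$ and $f_{Z_2}$ would reach the same conclusion---one checks that \eqref{eq:cond} annihilates the coefficient of $e^{-\theta_2 t}$---but the transform route sidesteps the delta-function bookkeeping and makes the single-phase reduction structurally transparent.
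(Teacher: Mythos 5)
Your proof is correct, and it takes a genuinely different route from the paper's. The paper works entirely in the time domain: it writes down the two-term ccdf of $Z_1+Z_2$ from the general $h=2$ formula \eqref{eq:ph_cdf-1}, factors out $e^{-\theta_1 t}$, and observes that hypothesis \eqref{eq:cond} is exactly the condition annihilating the coefficient of $e^{-\theta_2 t}$, leaving the single-exponential ccdf $p\,e^{-\theta_1 t}$ with $p$ as in \eqref{eq:cond_p}. You instead multiply one-phase Laplace--Stieltjes transforms and recognize \eqref{eq:cond} (equivalently $\theta_1=q_1\theta_2$) as a pole--zero cancellation at $s=-\theta_2$. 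The two arguments hinge on the same cancellation phenomenon, but yours buys several things: it is self-contained (it needs only the single-phase transform and independence, not the $h=2$ ccdf formula inherited from \cite{Chen2010}), it sidesteps the $\delta$-function bookkeeping, and it produces the parameterization $q=q_1q_2$, $p=p_1+q_1p_2$, which is probabilistically transparent ($P(Z=0)=P(Z_1=0)P(Z_2=0)$ is forced by independence; the condition is what collapses the continuous part to one exponential). The paper's route, by contrast, lands directly on the stated form \eqref{eq:cond_p}; your reconciliation of $p_1+q_1p_2$ with $(\Delta p_1+p_1p_2\theta_1)/\Delta$ via $\Delta=p_1\theta_2$ and $q_1=\theta_1/\theta_2$ is correct and closes that gap. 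The only implicit step you rely on is uniqueness of the Laplace transform to identify the law of $Z$ as $C(p,\theta_1)$, which is standard for nonnegative random variables and unobjectionable.
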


\begin{proof}

The ccdf of $Z~(Z_1+Z_2)$ is
\begin{equation}\label{eq:rm10}
 {S_{Z}}(t) = \left(1+ {\frac{{ {p_2}{\theta _1}}}{{{\theta _2} - {\theta
_1}}}} \right){p_1}{e^{ - {\theta _1}t}} +\left(1+ {\frac{{{p_1}{\theta
_2}}}{{{\theta _1} - {\theta _2}}}} \right){p_2}{e^{ - {\theta _2}t}}.
\end{equation}
Let $\Delta = \theta_2 - \theta_1$. Eq. \eqref{eq:rm10} can be rewritten as
\begin{align}\nonumber
&= \left( {1 + \frac{{{p_2}{\theta _1}}}{\Delta }} \right){p_1}{e^{ - {\theta _1}t}} + \left( {1 + \frac{{{p_1}\left( {{\theta _1} + \Delta } \right)}}{{ - \Delta }}} \right){p_2}{e^{ - \left( {{\theta _1} + \Delta } \right)t}}\\\label{eq:prof_3}
&= {e^{ - {\theta _1}t}}\left( {\left( {1 + \frac{{{p_2}{\theta _1}}}{\Delta }} \right){p_1} + \left( {1 + \frac{{{p_1}\left( {{\theta _1} + \Delta } \right)}}{{ - \Delta }}} \right){p_2}{e^{ - \Delta t}}} \right)
\end{align}
If I arbitrarily let $1 + \frac{{{p_1}\left( {{\theta _1} + \Delta } \right)}}{{ - \Delta }}$ in \eqref{eq:prof_3} be zero, i.e.,
\begin{align}\nonumber
 &1 + \frac{{{p_1}\left( {{\theta _1} + \Delta } \right)}}{{ - \Delta }}=0
 \Longleftrightarrow  \Delta = {{p_1}\left( {{\theta _1} + \Delta } \right)}\\\label{eq:cond_pro}
& \Longleftrightarrow  \theta_2 - \theta_1 = p_1 \theta_2,
\end{align}
then \eqref{eq:prof_3} can be written as
\begin{equation}\label{eq:prof_3_2}
S_{Z}(t) = p{e^{ - {\theta _1}t}},
\end{equation}
where
\begin{equation}\label{eq:cond_pro}
p={{ \frac{{{\Delta p_1 + p_1 p_2}{\theta _1}}}{\Delta}} }.
\end{equation}
Eq. \eqref{eq:cond_pro} is the condition of \eqref{eq:cond} and eq. \eqref{eq:prof_3_2} the ccdf of
$Z \sim C(p,\theta)$ with $p$ of \eqref{eq:cond_p} and $\theta$ of \eqref{eq:cond_theta}.

\end{proof}

The above lemma suggests that under a condition of \eqref{eq:cond}, the sum of two independent $C(\mathbf{p}, \boldsymbol \theta)$ distributed rvs with $h=1$ is another $C(\mathbf{p}, \boldsymbol \theta)$ distributed rv with $h=1$.


The result in Remark \ref{rem:exp} can be obtained from lemma \ref{rm:another_v} because $W(x_j,x_{j+1})$ and $\ddot{R}(x_j,x_{j+1})$ satisfy the condition $\theta_2 - \theta_1 = p_1 \theta_2$ of Remark \ref{rm:another_v}. In particular, let us temporarily let $\theta_1 = \mu c(x_j,x_{j+1}) - \lambda(x_j,x_{j+1})$,
$\theta_2 = \mu c(x_j,x_{j+1})$,
$p_1 = \lambda(x_j,x_{j+1})/\mu c(x_j,x_{j+1})$ and $p_2=1$. We can verify that the condition
$\theta_2 - \theta_1 = p_1 \theta_2$ holds. Moreover,
let $\Delta = \theta_2 - \theta_1 = \lambda(x_j,x_{j+1})$, I have
\begin{equation}\label{a}
p={{ \frac{{{\Delta p_1 + p_1 p_2}{\theta _1}}}{\Delta}} } = \frac{\Delta p_1 + p_1 \theta_1}{\Delta}=\frac{p_1\theta_2}{\Delta}=1.
\end{equation}

\section{Proof of Remark \ref{rm:less}}
\label{sec:less}

Based on \eqref{eq:appr_jitter_ud_akia} and \eqref{eq:jitter_ud_kia}, we have
\begin{align}\label{a}\nonumber
&\tilde{\sigma}^2_{{Z}} - \tilde{\sigma}^2_{\ddot{Z}}=   Var\left[\sum_{j=1}^{h} R(x_j,x_{j+1})\right] - Var\left[\sum_{j=1}^{h} \ddot{R}(x_j,x_{j+1})\right] \\\nonumber
&= \frac{1}{\overline{c}^2}Var[V] -  \left(\sum_{j=1}^{h} \frac{1}{c^2({x_j, x_{j+1}})}\right)Var[V(x_j,x_{j+1})]\\\nonumber
&= \left(\sum_{j=1}^{h} \frac{1}{c(x_j,x_{j+1})}\right)^2Var[V] -  \\\nonumber
&~~~\left(\sum_{j=1}^{h} \left(\frac{1}{c(x_j,x_{j+1})}\right)^2\right)Var[V(x_j,x_{j+1})]\\\nonumber
&= \left(\left(\sum_{j=1}^{h} \frac{1}{c(x_j,x_{j+1})}\right)^2 -  \sum_{j=1}^{h} \left(\frac{1}{c(x_j,x_{j+1})}\right)^2\right)Var[V(x_j,x_{j+1})]
\end{align}
Based on the Cauchy–Schwarz inequality, we have the following inequality:
\begin{equation}\label{a}
\left(\sum_{j=1}^{h} \frac{1}{c(x_j,x_{j+1})}\right)^2 > \sum_{j=1}^{h} \left(\frac{1}{c(x_j,x_{j+1})}\right)^2
\end{equation}
Therefore, we have
\begin{align}\label{eq:der_diff_sig}
\sigma^2_{{Z}} - \sigma^2_{\ddot{Z}}>0
\Longleftrightarrow \sigma_{{Z}} > \sigma_{\ddot{Z}}
\end{align}
which is \eqref{eq:std-wo-w}. The ``$\Leftrightarrow$'' in \eqref{eq:der_diff_sig} is an equivalence sign.



%


\ifCLASSOPTIONcaptionsoff
  \newpage
\fi

\bibliographystyle{IEEEtran}
\bibliography{part2}

\end{document}